\newcommand{\ports}{{\mathcal P}}
\newcommand{\Act}{{\mathcal Act}}
\newcommand{\wconf}{\sqsubseteq_w}
\newcommand{\sconf}{\sqsubseteq_s}
\newcommand{\tr}{{\mathcal L}}
\newcommand{\trpre}{{\mathcal L}_{pc}}
\newcommand{\trmax}{{\mathcal {\tilde L}}}
\newcommand{\makeFSM}{{\mathcal F}}
\newcommand{\error}{s_{e}}
\newcommand{\pref}{pref}
\newtheorem{theorem}{Theorem}
\newtheorem{lemma}{Lemma}
\newtheorem{proposition}{Proposition}
\newtheorem{definition}{Definition}
\newenvironment{proof}
{\noindent {\bf Proof} \\} {\hfill$\Box$}
\begin{document}


\title{Checking Finite State Machine Conformance when there are Distributed Observations}

\author{Robert M. Hierons}

\maketitle

\begin{abstract}
This paper concerns state-based systems that interact with their environment at physically distributed interfaces, called ports.
When such a system is used a projection of the global trace, called a local trace, is observed at each port.
This leads to the environment having reduced observational power:
the set of local traces observed need not uniquely define the global trace
that occurred.
We consider the previously defined implementation relation $\sconf$ and start by
investigating the problem of defining a language
$\trmax (M)$ for a multi-port finite state machine (FSM) $M$ such that
$N \sconf M$ if and only if every global trace of $N$ is in $\trmax (M)$.
The motivation is that
if we can produce such a language $\trmax (M)$ then this can potentially be used to inform development
and testing.
We show that $\trmax (M)$ can be uniquely defined but need not be regular.
We then prove that it is generally undecidable whether
$N \sconf M$,
a consequence of this result being that it is undecidable
whether there is a test case that is capable of distinguishing two states
or two multi-port FSM in distributed testing.
This result complements a previous result that it is undecidable
whether there is a test case that is guaranteed to distinguish two states or multi-port FSMs.
We also give some conditions under which $N \sconf M$ is decidable.
We then consider the implementation relation $\sconf^k$ that only concerns input
sequences of length $k$ or less.
Naturally, 
given FSMs $N$ and $M$ it is decidable whether $N \sconf^k M$
since only a finite set of traces is relevant.
We prove that if we place bounds on $k$ and the number of ports
then we can decide $N \sconf^k M$ in polynomial time
but otherwise this problem is NP-hard.
\end{abstract}
%
%
%
%
%
%
%
%

\section{Introduction}

Many systems interact with their environment at multiple physically distributed interfaces,
called ports,
with
web-services, cloud systems and wireless sensor networks being important
classes of such systems.
When we test a system that has multiple ports we place a local tester
at each port and the local tester at port $p$ only observes the events at $p$. 
This has led to the (ISO standardised) definition of the distributed test architecture in
which we have a set of distributed testers,
the testers do not communicate with one another during testing,
and there is no global clock \cite{iso9646_1}.
While it is sometimes possible to make testing more effective
by allowing the testers to exchange coordination messages
during testing \cite{cacciari99,RafiqC03},
this is not always feasible and the distributed test architecture
is typically simpler and cheaper to implement.
Importantly,
the situation in which separate agents (users or testers) interact with the system at
its ports can correspond to the expected use of the system.

Distributed systems often have a persistent internal state
and such systems are thus modeled or specified using
state-based languages.
In the context of testing the focus has largely been on
finite state machines (FSMs) and input output transition
systems (IOTSs).
This is both because such approaches are suitable and
because most tools and techniques for model-based testing\footnote{In model-based testing,
test automation is based on a model of the expected behaviour of the system or some
aspect of this expected behaviour.}
transform the models,
written in a high-level notation,
to an FSM or IOTS \cite{CartaxoMN11,Grieskamp06,Grieskamp11,FHP02,Tretmans08}.
Model-based testing has received much recent attention
since it facilitates test automation,
the results of a recent major industrial project showing
the potential for significant reductions in the cost of testing \cite{Grieskamp11}.

This paper concerns problems related to developing a multi-port system
based on a (multi-port) FSM model/specification.
Much of the work in the area of distributed testing has focussed on FSM models
\cite{AlmeidaMSTV10,dssouli85,dssouli86,Khoumsi02,sarikaya84,UralW06},
although there has also been work that considers more general models
such as IOTSs and variants of IOTSs
\cite{HaarJJ07,HJJ08,HieTestCom08,ATVA08,HieronsN10}.
While IOTSs are more expressive,
this paper explores decidability and complexity issues in distributed testing
and so we restrict attention to finite state models and,
in particular, 
to multi-port FSMs.
Naturally,
the negative decidability and complexity results proved in this paper
extend immediately to IOTSs.

When a state-based system interacts with its environment there is
a resultant sequence of inputs and outputs called a \emph{global trace}.
When there are physically distributed ports the user or tester
at a port $p$ only observes the sequence of events that occur at $p$,
the projection at $p$ of the global trace,
and this is called a \emph{local trace}.
It is known that the local testers only observing local traces introduces additional issues
into testing \cite{AlmeidaMSTV10,dssouli85,dssouli86,HaarJJ07,HJJ08,Khoumsi02,sarikaya84,UralW06}.

Previous work has shown that distributed testing introduces additional controllability
and observability problems.
A controllability problem occurs when a tester does not know when to supply
an input due to it not observing the events at the other ports
\cite{sarikaya84,dssouli85}.
Consider, for example,
the global trace shown in Figure \ref{fig:cont}.
We  use diagrams (message sequence charts)
such as this to represent scenarios.
In such diagrams vertical lines represent processes
and time progresses as we go down a line.
In this case the system under test (SUT) has two ports,
$1$ and $2$,
we have one vertical line representing the SUT,
one representing the local tester at port $1$,
and one representing the local tester at port $2$.
There is a controllability problem because the tester at port $2$
should send input $x'$ after $y$ has been sent by the SUT
but cannot know when this has happened since it does not
observe the events at port $1$

\begin{figure}
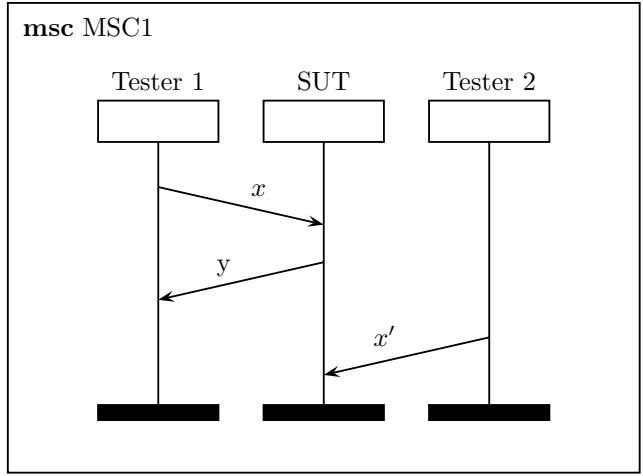

\begin{center}
\begin{msc}{MSC1}
\declinst{t1}{Tester $1$}{}
\declinst{sut}{SUT}{}
\declinst{t2}{Tester $2$}{}

\mess{$x$}{t1}{sut}[1]
\nextlevel
\nextlevel
\mess{y}{sut}{t1}[1]
\nextlevel
\nextlevel
\mess{$x'$}{t2}{sut}[1]
\nextlevel

\end{msc}
\caption{A controllability problem caused by input $x'$}
\label{fig:cont}
\end{center}
\end{figure}

Observability problems refer to the fact that the observational ability
of a set of distributed testers is less than that of a global tester since
the set of local traces need not uniquely define
the global trace that occurred  \cite{dssouli86}.
Consider, for example,
the global traces $\sigma$ and $\sigma'$ shown in Figures \ref{fig:obs1}
and \ref{fig:obs2} respectively.
These global traces are different but the local testers observe the
same local traces:
in each case the tester at port $1$ observes $xyxy$ and the tester at
port $2$ observes $y'$.
Recent work has defined new notions of conformance (implementation relations)
that recognise this reduced observational power of the environment \cite{Hierons_distrib_Oracle,HieTestCom08}.
These implementation relations essentially say that the SUT conforms to the
specification if the environment cannot observe a failure.
When using such implementation relations, we do not have to consider
observability problems:
if a global trace $\sigma$ of the SUT is observationally equivalent to one in the
specification then $\sigma$ is considered to be an allowed behaviour
since a set of distributed testers or users would not observe a failure.

\begin{figure}
\begin{center}
\begin{msc}{MSC2}
\declinst{t1}{Tester $1$}{}
\declinst{sut}{SUT}{}
\declinst{t2}{Tester $2$}{}
\mess{$x$}{t1}{sut}[1]
\nextlevel
\nextlevel
\mess{$y$}{sut}{t1}[1]
\mess{$y'$}{sut}{t2}[1]
\nextlevel
\nextlevel
\mess{$x$}{t1}{sut}[1]
\nextlevel
\nextlevel
\mess{$y$}{sut}{t1}[1]
\nextlevel
\end{msc}
\caption{Global trace $\sigma$.} \label{fig:obs1}
\end{center}
\end{figure}

\begin{figure}
\begin{center}
\begin{msc}{MSC3}
\declinst{t1}{Port $1$}{}
\declinst{sut}{SUT}{}
\declinst{t2}{Port $2$}{}
\mess{$x$}{t1}{sut}[1]
\nextlevel
\nextlevel
\mess{$y$}{sut}{t1}[1]
\nextlevel
\nextlevel
\mess{$x$}{t1}{sut}[1]
\nextlevel
\nextlevel
\mess{$y$}{sut}{t1}[1]
\mess{$y'$}{sut}{t2}[1]
\nextlevel
\end{msc}
\caption{Global trace $\sigma'$.} \label{fig:obs2}
\end{center}
\end{figure}

Given multi-port FSMs $N$ and $M$,
there are two notions of conformance for situations in which distributed observations are made:
weak conformance ($\wconf$) and strong conformance ($\sconf$).
Under $\wconf$,
it is sufficient that for every global trace $\sigma$ of $N$ and port $p$
there is some global trace $\sigma_p$ of $M$ such that
$\sigma$ and $\sigma_p$ are indistinguishable at port $p$;
they have the same local traces at $p$.
In contrast, under $\sconf$ we require that 
for every global trace $\sigma$ of $N$ 
there is some global trace $\sigma'$ of $M$ such that
$\sigma$ and $\sigma'$ are indistinguishable at all of the ports.
To see the difference, let us suppose that there are two allowed
responses to input $x_1$ at port $1$:  either $y_1$ at port $1$ and $y_2$ at port
$2$ (forming global trace $\sigma$) or $y'_1$ at port $1$ and $y'_2$ at port $2$
 (forming global trace $\sigma'$) .
Under $\wconf$ it is acceptable for the SUT to respond to $x_1$ with
$y_1$ at port $1$ and $y'_2$ at port $2$ since the local trace
at port $1$ is $x_1y_1$,
which is a projection of $\sigma$,
and the local trace at port $2$ is $y'_2$, which is a projection of $\sigma'$.
However, this is not acceptable under $\sconf$ since there is no
global trace of the specification that has projection $x_1y_1$ at port $1$
and projection $y'_2$ at port $2$.

One of the benefits of using an FSM to model the required behaviour
of a system that interacts with its environment at only one port
is that there are standard algorithms for many problems that are relevant
to test generation.
For example, 
we can decide whether there are strategies (test cases) that reach or distinguish
states \cite{alur95}
and such strategies are used by many test generation algorithms
\cite{aho91,chow78,hennie64,luo94a,PetrenkoY05,ural97}.
In addition, if we have an FSM specification $M$ and an FSM
design $N$ then we can decide whether $N$ conforms to $M$.
Thus,
if we wish to adapt standard FSM test techniques to the situation
where we have distributed testers then we need to investigate
corresponding problems for multi-port FSMs.
Recent work has shown that it is undecidable whether there
is a strategy that is guaranteed to reach a state or distinguishes two states
of an FSM in distributed testing  \cite{Hierons10_SICOMP}.
However,
this left open the question of whether one can decide
whether one FSM conforms to another.
It also left open the related question of
whether it is decidable whether there is a strategy that is capable of
distinguishing two
FSMs\footnote{It is decidable whether there is a strategy that
is capable of reaching a given state of an FSM.}.

This paper concerns the problem of deciding,
for multi-port FSMs $M$ and $N$,
whether $N$ conforms to $M$.
Clearly,
this can be decided in low order polynomial time for $\wconf$:
for each port $p$ we simply compare the projections of $N$ and $M$ at $p$.
However, $\wconf$ will often be too weak since it assumes that we can never
have the situation in which an agent is aware of observations
made at two or more ports.
We therefore focus on the implementation relation $\sconf$.

We start by investigating the question of whether,
given a multi-port FSM $M$,
we can define a language $\trmax (M)$ such that
for every multi-port FSM $N$ we have that
$N \sconf M$ if and only if all global traces of $N$ are in $\trmax (M)$.
If we can define such an $\trmax (M)$ then there is the potential to explore
properties of this in order to find algorithms for deciding $N \sconf M$
for classes of $N$ and $M$.
There is also the potential to base testing and development on $\trmax (M)$.
It has already
been shown that we can produce such a language $\trmax (M)$
for the special case where we restrict testing to
controllable input sequences and are testing from deterministic FSMs \cite{Hierons10}.
We prove that $\trmax (M)$ is uniquely defined but need not be regular.

We then consider the problem of determining whether $N \sconf M$
for multi-port FSMs $N$ an $M$ and prove that this is generally undecidable.
We also give some conditions under which $N \sconf M$
is decidable.
Clearly,
this problem is important when we are checking an FSM design
against an FSM specification.
In addition, $N \sconf M$ if no \emph{possible} behaviour of $N$ can be
distinguished from the behaviours of $M$.
Thus,
since it is undecidable whether $N \sconf M$
it is also undecidable whether there is a test case that is \emph{capable} of
distinguishing two states or FSMs.
This complements the result that it is undecidable whether there is a
test case that is guaranteed to distinguish two states or FSMs  \cite{Hierons10_SICOMP}.
However,
the proofs use very different approaches:
the proof of the previous result  \cite{Hierons10_SICOMP} used results from multi-player games
while in this paper we develop and then use results regarding
multi-tape automata.
Note that many traditional methods for testing from an FSM use sequences that distinguish
between states,
in order to check that a (prefix of a) test case takes the SUT to a correct state
\cite{aho91,chow78,hennie64,luo94a,PetrenkoY05,ural97}.
The results in this paper and in \cite{Hierons10_SICOMP}
suggest that it will be difficult to adapt such techniques for distributed testing.
In addition,
we can represent a possible fault in the SUT by an FSM $N$
formed by introducing the fault into the specification FSM $M$:
the results in this paper show that it is undecidable whether
there is a test case that can detect such a `fault',
and thus also whether it represents an incorrect implementation.

Since it is undecidable whether $N \sconf M$,
we define a weaker implementation relation $\sconf^k$ that only
considers sequences of length $k$ or less.
This is relevant when we know a bound on the length of
sequences in use or we know that the system will be
reset after at most $k$ inputs have been received.
For example,
a protocol might have a bound on the number of
steps that can occur before a `disconnect' happens.
Naturally, it is decidable whether $N \sconf^k M$ since we
only have to reason about finite sets of global traces.
We prove that if we place a bound on $k$ and the number of ports
then we can decide whether $N \sconf^k M$ in polynomial time
but the problem is NP-hard if we do not have such bounds.

This paper is structured as follows.
Section \ref{section:preliminaries} provides preliminary definitions.
Section \ref{section:models} then investigates the problem of defining
a language $\trmax (M)$ such that
for every multi-port FSM $N$ we have that
$N \sconf M$ if and only if all global traces of $N$ are in $\trmax (M)$.
In Section \ref{section:multitape}
we prove results regarding multi-tape automata that we use in
Section \ref{section:decide_strong}
to prove that it is generally undecidable whether $N \sconf M$.
Section \ref{section:decide_strong}
also gives conditions under which $N \sconf M$ is decidable.
In Section \ref{section:bounded}
we then explore $ \sconf^k$.
Finally, 
in Section \ref{section:conclusions}
we draw conclusions and discuss possible lines of future work.

\section{Preliminaries}\label{section:preliminaries}

This paper concerns the testing of state-based systems whose behaviour
is characterised by the input/output sequences (global traces) that they can produce.
Given a set $A$ we let $A^*$ denote the set of sequences formed
from elements of $A$ and we let $\epsilon$ denote the empty sequence.
In addition, $A^+$ denotes the set of non-empty sequences in $A^*$.
Given sequence $\sigma \in A^*$ we let $\pref(\sigma)$ denote
the set of prefixes of $\sigma$.
We are interested in finite state machines,
which define global traces (input/output sequences).
Given a global trace $\sigma = x_1/y_1 \ldots x_k/y_k$,
in which $x_1, \ldots, x_k$ are inputs and $y_1, \ldots, y_k$ are outputs,
the prefixes of $\sigma$ are the global traces of the form
$x_1/y_1 \ldots x_j/y_j$ with $j \leq k$.

In this paper we investigate the situation in which a system interacts with its
environment at $n$ physically distributed interfaces, called ports.
We let $\ports = \{1, \ldots, n\}$ denote the names of these ports.
Then a multi-port FSM $M$ is defined by a tuple
$(S,s_0,I,O,h)$ in which $S$ is the finite set of states,
$s_0 \in S$ is the initial state, $I$ is the finite input alphabet,
$O$ is the finite output alphabet, and $h$ is the transition relation.
The set of inputs is partitioned into subsets $I_1, \ldots, I_n$
such that for $p \in \ports$ we have that $I_p$ is the set of inputs
that can be received at port $p$.
Similarly, for port $p$ we let $O_p$ denote the set of output that can be observed at $p$.
As is usual \cite{dssouli85,dssouli86,sarikaya84,UralW06}
we allow an input to lead to outputs at several ports and so we
let $O = ((O_1 \cup \{-\}) \times \ldots \times (O_n \cup \{-\}))$
in which $-$ denotes null output.
We can ensure that the $I_p$ and also the $O_p$ are pairwise
disjoint by labelling input and output with the port name,
where necessary.
We let $\Act = I \cup O$ denote the set of possible observations
and for $p \in \ports$ we let $\Act_p = I_p \cup O_p$ denote the set of
possible observations at port $p$.

The transition relation $h$ is of type $S \times I \leftrightarrow S \times O$ and
should be interpreted in the following way:
if $(s',y) \in h(s,x)$, $y = (z_1, \ldots, z_n)$,
and $M$ receives input $x$ when in state
$s$ then it can move to state $s'$ and send output $z_p$ to port $p$ (all $p \in \ports$).
This defines the \emph{transition} $(s,s',x/y)$,
which is a \emph{self-loop transition} if $s=s'$.
Since we only consider multi-port FSMs in this paper,
we simply call them FSMs.
The FSM $M$ is said to be a \emph{deterministic FSM (DFSM)} if 
$|h(s,x)| \leq 1$ for all $s \in S$ and $x \in I$.

An FSM $M$ is completely-specified if for every state $s$ and input $x$,
we have that $h(s,x) \neq \emptyset$.
A sequence $(s_1,s_2,x_1/y_1)(s_2,s_3,x_2/y_2) \ldots (s_{k},s_{k+1}k,x_{k}/y_{k})$
of consecutive transitions is said to be a \emph{path},
which has \emph{starting state} $s_1$ and \emph{ending state} $s_{k+1}$.
This path has \emph{label} $x_1/y_1 \ldots x_k/y_k$, which is called a (global) \emph{trace}.
Further,
$x_1 \ldots x_k$ and $y_1 \ldots y_k$ are said to be the \emph{input portion} and
the \emph{output portion} respectively of $x_1/y_1 \ldots x_k/y_k$.
A path is a \emph{cycle} if its starting and ending states are the same.
The FSM $M$ defines the regular language $L(M)$ of the labels of paths of
$M$ that have starting state $s_0$.
Given state $s \in S$ of $M$ we let $L_M(s)$ denote the set of global traces
that are labels of paths of $M$ with starting state $s$,
and so $L(M) = L_M(s_0)$.
We say that $M$ is \emph{initially connected} if for every state $s$ of $M$
there is a path that has starting state $s_0$ and ending state $s$.
Throughout this paper we assume that any FSM considered is completely-specified
and initially connected.
Where this condition does not hold we can remove the states that cannot be reached
and we can complete the FSM by, for example,
either adding self-loop transitions with null output or transitions to an error state.

At times we will use results regarding finite automata (FA) and
so we briefly define FA here.
A FA $M$ is defined by a tuple
$(S,s_0,X,h,F)$ in which $S$ is the finite set of states,
$s_0 \in S$ is the initial state, $X$ is the finite alphabet,
$h$ is the transition relation,
and $F \subseteq S$ is the set of final states.
The transition relation has type $S \times (X \cup \{\tau\}) \times S$ where
$\tau$ represents a silent transition that is not observed.
The notions of a path and its label,
which does not include instances of $\tau$,
correspond to those defined for FSMs and so are not defined here.
The FA $M$ defines the language $L(M)$ of labels
of paths that have starting state $s_0$ and an ending
state in $F$.

For a global trace $\sigma$ and port $p \in \ports$ let $\pi_p(\sigma)$ denote
the \emph{local trace} formed by removing from $\sigma$ all elements that do
not occur at $p$.
This is defined by the following rules in which $\sigma$ is a global
trace (see, for example, \cite{HieronsU08}).

\begin{eqnarray*}
\pi_p(\epsilon) & = & \epsilon \\
\pi_p((x/(z_1, \ldots, z_n))\sigma) & = & \pi_p(\sigma) \mbox{ if } x \not \in I_p \wedge z_p = - \\
\pi_p((x/(z_1, \ldots, z_n))\sigma) & = & x\pi_p(\sigma) \mbox{ if } x \in I_p \wedge z_p = - \\
\pi_p((x/(z_1, \ldots, z_n))\sigma) & = & z_p\pi_p(\sigma)
\mbox{ if } x \not \in X_p \wedge z_p \neq - \\
\pi_p((x/(z_1, \ldots, z_n))\sigma) & = & xz_p\pi_p(\sigma)
\mbox{ if } x \in X_p \wedge z_p \neq -
\end{eqnarray*}

Given a set $A$ of global traces and port $p$ we let $\pi_p(A)$
denote the set of projections of sequences in $A$.
Thus, $\pi_p(A) = \{ \pi_p(\sigma) | \sigma \in A\}$.

In the distributed test architecture,
a local tester at port $p \in \ports$ only observes events from $\Act_p$.
Thus, two global traces $\sigma$ and $\sigma'$ are indistinguishable if they have the
same projections at every port and we denote this $\sigma \sim \sigma'$.
More formally, we say that $\sigma \sim \sigma'$
if for all $p \in \ports$ we have that $\pi_p(\sigma) = \pi_p(\sigma')$.

Given an FSM $M$, we let $\tr(M)$ denote the set of global sequences that
are equivalent to elements of $L(M)$ under $\sim$.
These are the sequences that are indistinguishable from sequences in $L(M)$
when distributed observations are made.
Previous work has defined two conformance relations for testing
from an FSM that reflect the observational power of distributed testing \cite{Hierons_distrib_Oracle}.
In some situations the agents at the separate ports of the SUT will never
interact with one another or share information with other agents
that can interact.
In such cases it is sufficient for
the local trace observed at a port $p$ to be a local trace of $M$.
This situation is captured by the following conformance relation.

\begin{definition}
Given FSMs $N$ and $M$ with the same input and output alphabets and the same
set of ports, $N \wconf M$ if for every global trace $\sigma \in L(N)$ and port $p$
there exists some $\sigma_p \in L(M)$ such that $\pi_p(\sigma_p) =
\pi_p(\sigma)$.  $N$ is then said to \emph{weakly conform} to $M$.
\end{definition}

However, sometimes there is the potential for information from separate testers to be received
by an external agent. 
For example, there may be a central controller that receives the observations
made by each tester and thus knows the projection of the global trace at
each port.
This leads to the following stronger conformance relation.

\begin{definition}
Given FSMs $N$ and $M$ with the same input and output alphabets and the same
set of ports, $N \sconf M$ if for every global trace $\sigma \in L(N)$ 
there exists some $\sigma' \in L(M)$ such that $\sigma' \sim \sigma$.  $N$ is then said to \emph{strongly conform} to  $M$.
\end{definition}

It is straightforward to see that 
given FSMs $N$ and $M$ we have that $N \sconf M$ if and only if $L(N) \subseteq \tr(M)$.
It is also clear that $N \sconf M$ implies that $N \wconf M$.
In order to see that $\sconf$ is strictly stronger than $\wconf$ it is sufficient to consider
the FSMs $M_1$ and $N_1$ shown in Figure \ref{fig:m1n1}.
Clearly we do not have that $N_1 \sconf M_1$ since $M_1$ has no global trace equivalent
to $x_1/(y_1,y'_2)$ under $\sim$.
However, for every global trace $\sigma$ of $N_1$ and port $p$ there is a global trace $\sigma'$ of $M_1$
such that $\pi_p(\sigma) = \pi_p(\sigma')$.
Thus, 
we have that $N_1 \wconf M_1$.

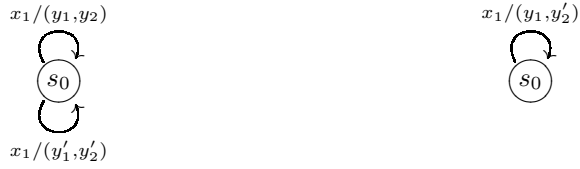
\begin{figure}
\begin{center}
\[\UseTips
\entrymodifiers = {}
\xymatrix @+1cm {
*++[o][F]{s_0} \ar@(ul,ur)[]^{x_1/(y_1,y_2)} \ar@(dl,dr)[]_{x_1/(y'_1,y'_2)} & & & *++[o][F]{s_0} \ar@(ul,ur)[]^{x_1/(y_1,y'_2)} \\
}
\]
\caption{Finite State Machines $M_1$ and $N_1$} \label{fig:m1n1}
\end{center}
\end{figure}

\section{Test models for conformance}\label{section:models}

In this section we investigate the problem of defining a language $\trmax (M)$
for an FSM $M$ such that $N \sconf M$ if and only if $L(N) \subseteq
\trmax (M)$. 
The motivation is that if we are developing the SUT from $M$
and we do have some such $\trmax (M)$ then we can use standard approaches to refine $\trmax (M)$.
In addition, if in testing we wish to test for $\sconf$ but we can
connect the local testers to form a global tester then we should compare the global traces of the SUT with $\trmax (M)$:
if we compare the global traces from the SUT with $L(M)$ then we could lead to the SUT $N$ being declared faulty
even if $N \sconf M$. 
It would be particularly useful if we  could find an FSM or IOTS $M'$ such that $L(M')
= \trmax (M)$;
we could then test $N$ against $M'$ using normal test methods
and the usual conformance relation (trace inclusion). 

We start by considering the language $\tr (M)$.
Clearly,
we have that $N \sconf M$ if and only if $L(N) \subseteq
\trmax (M)$. 
However,
if we can represent $\tr (M)$ using an FSM or IOTS then for every
$\sigma \in \tr (M)$ and $\sigma' \in pre(\sigma)$ we must have
that $\sigma' \in \tr (M)$: $\tr (M)$ must be prefix closed.

\begin{proposition}\label{prop:not_prefix_closed}
The language $\tr (M)$ need not be prefix closed.
\end{proposition}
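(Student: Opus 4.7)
The plan is to prove this by exhibiting a concrete two-port FSM $M$ for which $\tr(M)$ is not prefix closed. The key idea is that two global traces that reorder events occurring at distinct ports have identical local projections at every port and are therefore $\sim$-equivalent, but truncating such a reordered trace can yield a sequence whose pattern of port-local events is not realisable by $M$ from its initial state.

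Concretely I would take two ports with $I_1 = \{a\}$, $O_1 = \{c\}$, $I_2 = \{b\}$, $O_2 = \{d\}$, and an FSM with states $s_0, s_1$ in which $a$ at $s_0$ produces the output $(c,-)$ and moves to $s_1$, while $b$ at $s_1$ produces $(-,d)$ and returns to $s_0$. The global trace $\tau = a/(c,-) \cdot b/(-,d)$ then lies in $L(M)$, and the reordered trace $\sigma = b/(-,d) \cdot a/(c,-)$ has identical projections, namely $ac$ at port $1$ and $bd$ at port $2$. Hence $\sigma \sim \tau$ and $\sigma \in \tr(M)$. The prefix $\sigma' = b/(-,d)$ of $\sigma$ has projections $\epsilon$ at port $1$ and $bd$ at port $2$.

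To complete $M$ I would define the remaining transitions $h(s_0,b)$ and $h(s_1,a)$ so that every transition of $M$ either consumes an input at port $1$ or emits a non-null output at port $1$; for instance, $h(s_0,b) = (s_0,(c,d))$ and $h(s_1,a) = (s_1,(c,-))$ suffice. A short induction on the length of a trace then shows that every non-empty global trace of $M$ has a non-empty projection at port $1$, and hence no global trace of $M$ can be $\sim$-equivalent to $\sigma'$. Together with $\sigma \in \tr(M)$ and $\sigma' \in \pref(\sigma)$, this shows that $\tr(M)$ is not prefix closed.

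The only delicate point is the completion step: one must choose the added transitions so as not to accidentally produce a trace whose projections coincide with those of $\sigma'$. The invariant that every transition of $M$ contributes at port $1$ is a convenient way to rule this out, but other completions (for example, redirecting the missing transitions to a trap state whose behaviour also acts at port $1$) would work equally well.
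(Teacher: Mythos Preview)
Your example works, but the justification that $\sigma' = b/(-,d) \notin \tr(M)$ contains a slip. You assert that after completion ``every transition of $M$ either consumes an input at port $1$ or emits a non-null output at port $1$,'' yet the core transition $h(s_1,b) = (s_0,(-,d))$ you began with does neither: its input $b$ is at port $2$ and its output is $(-,d)$. So the induction you propose does not go through as stated. The repair is immediate and simpler than the invariant: any global trace $\rho$ with $\rho \sim \sigma'$ has exactly one input in total (since $\pi_1(\sigma') = \epsilon$ and $\pi_2(\sigma') = bd$), so $\rho$ has length $1$ and must equal $b/(-,d)$; but in your completed $M$ the unique response to $b$ in $s_0$ is $(c,d)$, hence $b/(-,d) \notin L(M)$ and $\sigma' \notin \tr(M)$.

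For comparison, the paper's argument is shorter and sidesteps the completion issue entirely. It takes a DFSM $M$ with $x_1/(y_1,y_2)\,x_1/(y_1,-) \in L(M)$, where $x_1 \in I_1$. Swapping the two input/output pairs yields $x_1/(y_1,-)\,x_1/(y_1,y_2)$, which has the same projections ($x_1y_1x_1y_1$ at port $1$ and $y_2$ at port $2$) and therefore lies in $\tr(M)$; its length-$1$ prefix $x_1/(y_1,-)$ is not in $\tr(M)$ because determinism forces the response to $x_1$ in the initial state to be $(y_1,y_2)$. By reusing a single input at a single port the paper avoids having to reason about how the FSM is completed on other inputs; your two-input, two-port construction reaches the same conclusion but carries more overhead and, as seen above, requires a little more care in the argument.
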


\begin{proof}
Consider a DFSM $M$ such that $x_1/(y_1,y_2) x_1/(y_1,-) \in
L(M)$. Then $\tr (M)$ contains  $x_1/(y_1,-)
x_1/(y_1,y_2)$ but $x_1/(y_1,-) \not \in \tr (M)$ and so $\tr(M)$ is not prefix closed.
\end{proof}

The languages defined by FSMs and IOTSs are prefix closed and so we know
that $\tr (M)$ cannot always be represented by such a model.
However,
the languages defined by finite automata need not be prefix closed.
Thus, Proposition \ref{prop:not_prefix_closed} does not preclude the possibility of
representing the $\tr (M)$ using finite automata, however, the
following does.
It is already known from Mazurkiewicz trace theory that,
where some elements of an alphabet commute (i.e. $ab = ba$)
the set of sequences equivalent to those defined by
a FA need not be regular (see, for example, \cite{diekert97}). 
It is straightforward to show that this also holds for FSMs.

\begin{proposition}\label{prop:lang_M_not_regular}
Given FSM $M$,
the language $\tr (M)$ need not be regular.
\end{proposition}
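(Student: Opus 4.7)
My plan is to exhibit a concrete two-port DFSM $M$ whose language $\tr(M)$ is non-regular, obtained by coupling the two ports so tightly that the equivalence $\sim$ forces a ``matching'' constraint of the form $a^n b^n$.

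Concretely, I would take two ports with $I_1 = \{x_1\}$, $I_2 = \{x_2\}$, $O_1 = \{y_1\}$, $O_2 = \{y_2\}$, and build $M$ with three states $s_0, s_1, s_t$, defining $h(s_0, x_1) = (s_1, (y_1, -))$ and $h(s_1, x_2) = (s_0, (-, y_2))$. The remaining input/state pairs (needed to make $M$ completely specified) are routed to a trap state $s_t$ with null output, and $s_t$ self-loops on both inputs with null output. Thus the ``clean'' traces in $L(M)$ are precisely $(x_1/(y_1,-) \cdot x_2/(-,y_2))^*$ together with the odd-length prefixes ending in $x_1/(y_1,-)$.

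The next step is to pick the regular ``canonical form'' language $R = \{x_1/(y_1,-)\}^* \{x_2/(-,y_2)\}^*$ and compute $\tr(M) \cap R$. For $\sigma = (x_1/(y_1,-))^a (x_2/(-,y_2))^b \in R$, the projections are $\pi_1(\sigma) = (x_1 y_1)^a$ and $\pi_2(\sigma) = (x_2 y_2)^b$. Any witness $\sigma' \in L(M)$ with $\sigma \sim \sigma'$ must produce the same projections at both ports. If $\sigma'$ avoids the trap then $\sigma'$ is clean and has projections $(x_1y_1)^n(x_2y_2)^n$ or $(x_1y_1)^{n+1}(x_2y_2)^n$, forcing $a = b$ or $a = b+1$. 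Hence $\tr(M) \cap R = \{(x_1/(y_1,-))^a (x_2/(-,y_2))^b : a \in \{b, b+1\}\}$, which is non-regular by a standard pumping-lemma argument; since regular languages are closed under intersection with regular sets, $\tr(M)$ itself is non-regular.

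The one genuinely delicate step is ruling out witnesses $\sigma'$ that pass through $s_t$. The key observation is that every transition out of $s_t$, and every transition into $s_t$, contributes an input symbol to $\pi_1$ or $\pi_2$ with no corresponding $y_1$ or $y_2$ ever emitted afterwards (since $s_t$ is absorbing with null output). Consequently, any $\sigma'$ that visits $s_t$ has a projection at port $1$ or port $2$ containing an ``unpaired'' input, which cannot equal $(x_1y_1)^a$ or $(x_2y_2)^b$. This reduces the analysis to the clean traces and completes the argument. The hint from Mazurkiewicz trace theory mentioned just before the proposition makes this strategy natural: the partial commutation induced by $\sim$ (events at distinct ports commute) turns the regular constraint ``equal number of $x_1/(y_1,-)$ and $x_2/(-,y_2)$ symbols in alternation'' into a context-free but non-regular constraint on linearizations.
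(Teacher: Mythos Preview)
Your proof is correct and follows essentially the same strategy as the paper: build a two-port DFSM whose ``main'' behaviour alternates $x_1$ and $x_2$, use outputs to separate on-path from off-path (trap) behaviour, and then intersect $\tr(M)$ with a suitable regular language to expose a non-regular counting constraint of the form $\#x_1 \in \{\#x_2,\#x_2+1\}$. The paper's construction is the dual of yours---it puts null output on the alternating transitions and non-null output on the trap, then intersects with the regular set of all-null-output traces---but the idea, the contradiction via closure under regular intersection, and even the resulting counting language are the same.
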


\begin{proof}
Consider the FSM $M_4$ shown in Figure \ref{fig:not_regular} and
let $L = \tr (M_4)$. Proof by contradiction: assume that $L$ is
a regular language. 

Let $L' $ be the set of global traces in which all outputs are $(-,-)$.
Clearly $L'$ is a regular language. Thus, since $L$ is
regular we must have that $L'' = L \cap L'$ is regular.
The language $L''$ is the set of global traces from $\tr(M_4)$ that
have null output.
Thus, $L''$ is the set of all global traces with inputs drawn from $\{x_1,x_2\}$ and that have
null output and in which the number of
instances of $x_2$ is either equal to the number of instances of $x_1$
or is one less than this.
However, this language is not regular, providing a contradiction as required.
\end{proof}

\begin{figure}
\begin{center}
\[\UseTips
\entrymodifiers = {}
\xymatrix @+1cm {
*++[o][F]{s_0} \ar@/^1.5pc/[rrr]^{x_1/(-,-)} \ar[d]^{x_2/(y'_1,y'_2)} & & & *++[o][F]{s_1}  
\ar@/^1.5pc/[lll]^{x_2/(-,-)} \ar[d]^{x_1/(y_1,y_2)} \\
*++[o][F]{s_3} \ar@(dl,ul)[]^{x_1/(y'_1,y'_2)} \ar@(dr,ur)[]_{x_2/(y'_1,y'_2)} & & &
*++[o][F]{s_4} \ar@(dl,ul)[]^{x_1/(y_1,y_2)} \ar@(dr,ur)[]_{x_2/(y_1,y_2)} \\
}
\]
\caption{DFSM $M_4$}
\label{fig:not_regular}
\end{center}
\end{figure}
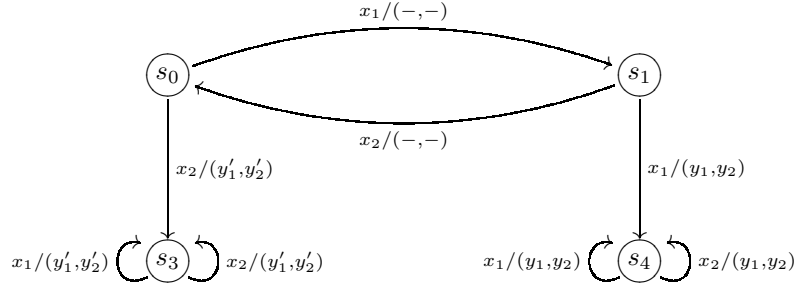

We have that $\tr (M)$ has the property we want ($N \sconf M$ if and only if
$L(N) \subseteq \tr (M)$) but that $\tr (M)$ need not be regular.
The observation that $\tr (M)$ is not prefix closed tells us
that it can contain strings that are in no $L(N)$ for an FSM or IOTS $N$
such that $N \sconf M$.
It seems natural to remove these global traces.

\begin{definition}
Given FSM $M$ let $\trpre (M)$ denote the set of global traces from
$\tr (M)$ whose prefixes are also in $\tr (M)$. More formally,
$\trpre (M) = \{\sigma \in \tr (M) | pre(\sigma) \subseteq \tr (M)\}$.
\end{definition}

\begin{proposition}
Given FSMs $M$ and $N$ we have that $N \sconf M$ if and only
if $L(N) \subseteq \trpre (M)$.
\end{proposition}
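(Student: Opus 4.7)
The plan is to prove both directions by unpacking the definitions and using the fact that the language $L(N)$ of any FSM $N$ is prefix closed. Recall we already have (noted just before Proposition~\ref{prop:not_prefix_closed}) that $N \sconf M$ iff $L(N) \subseteq \tr(M)$, so the task reduces to showing that $L(N) \subseteq \tr(M)$ iff $L(N) \subseteq \trpre(M)$.

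First I would handle the easy direction. Since $\trpre(M) \subseteq \tr(M)$ by definition, any $N$ with $L(N) \subseteq \trpre(M)$ automatically satisfies $L(N) \subseteq \tr(M)$, and hence $N \sconf M$ by the earlier observation.

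For the forward direction, suppose $N \sconf M$, i.e.\ $L(N) \subseteq \tr(M)$, and pick an arbitrary $\sigma \in L(N)$. I need to show $\sigma \in \trpre(M)$, i.e.\ that every $\sigma' \in \pref(\sigma)$ is in $\tr(M)$. The key observation is that $L(N)$ is prefix closed: $\sigma$ is the label of a path of $N$ from $s_0$, and every prefix of that path is again a path from $s_0$, so $\pref(\sigma) \subseteq L(N)$. Applying the hypothesis $L(N) \subseteq \tr(M)$ gives $\pref(\sigma) \subseteq \tr(M)$, and hence $\sigma \in \trpre(M)$ by the definition of $\trpre$. Since $\sigma$ was arbitrary, $L(N) \subseteq \trpre(M)$.

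There is no real obstacle here; the proof is essentially bookkeeping once one notes the prefix-closure of $L(N)$. The only thing worth being careful about is not conflating prefix closure of $L(N)$ with prefix closure of $\tr(M)$: Proposition~\ref{prop:not_prefix_closed} tells us $\tr(M)$ itself need not be prefix closed, which is exactly why the refinement $\trpre(M)$ was introduced in the first place, and why the forward implication is not trivially an equality of the two languages.
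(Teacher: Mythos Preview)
Your proposal is correct and follows essentially the same approach as the paper: both arguments reduce to the equivalence $N \sconf M \Leftrightarrow L(N) \subseteq \tr(M)$ and then use prefix closure of $L(N)$ to show that the traces discarded in passing from $\tr(M)$ to $\trpre(M)$ can never lie in $L(N)$ when $N \sconf M$. Your write-up is simply more explicit about treating the two directions separately, whereas the paper compresses the same idea into a single sentence.
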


\begin{proof}
In forming $\trpre (M)$ we only remove from $\tr (M)$
a global trace $\sigma$ if some of its prefixes are
not in $\tr (M)$, and so $\sigma$ cannot be a global trace of an FSM
$N$ such that $N \sconf M$. The result therefore follows from
the fact that $N \sconf M$ if and only if $L(N) \subseteq \tr(M)$.
\end{proof}

Clearly, $\trpre (M) \subseteq \tr (M)$ and so it is natural
to ask whether there remain any global traces in $\trpre (M)$ that
cannot appear in FSMs that conform to $M$ under $\sconf$.

\begin{proposition}\label{prop:DFSM_with_sigma}
Given an FSM $M$ and global trace $\sigma \in \trpre (M)$ there exists
an FSM $N$ such that $N \sconf M$ and $\sigma \in L(N)$.
Further, this is true even if we restrict $N$ to being deterministic.
\end{proposition}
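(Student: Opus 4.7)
The plan is to construct $N$ explicitly as a deterministic completion of the path that realises $\sigma$, using states of $M$ to handle all off-spine behaviour. Let $\sigma = x_1/y_1 \ldots x_k/y_k$ and write $\sigma_i = x_1/y_1 \ldots x_i/y_i$ for its prefixes (with $\sigma_0 = \epsilon$). Since $\sigma \in \trpre(M)$ every $\sigma_i$ lies in $\tr(M)$, so for each $i$ we may fix a path $\rho_i$ of $M$ from $s_0$ whose label satisfies $\rho_i \sim \sigma_i$; call its ending state $t_i$ (with $t_0 = s_0$ and $\rho_0 = \epsilon$).

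Next I would define a DFSM $N = (Q, q_0, I, O, h_N)$ with state set $Q = \{q_0, q_1, \ldots, q_k\} \cup S$ where $S$ is the state set of $M$. Fix once and for all a deterministic choice function $\delta$ that, for each pair $(s,x)$ with $s \in S$, picks a single element $\delta(s,x) = (s',y) \in h(s,x)$; this is possible because $M$ is completely specified. The transition function of $N$ is then defined by: (i) from $q_i$ with $i < k$, on input $x_{i+1}$ go to $q_{i+1}$ with output $y_{i+1}$ (the spine); (ii) from $q_i$ with $i < k$, on any other input $x$, set $h_N(q_i,x) = \{\delta(t_i,x)\}$; (iii) from $q_k$, on every input $x$, set $h_N(q_k,x) = \{\delta(t_k,x)\}$; (iv) from any $s \in S$, on input $x$ set $h_N(s,x) = \{\delta(s,x)\}$. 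By construction $N$ is deterministic and completely specified, the spine realises $\sigma$, so $\sigma \in L(N)$, and after discarding states not reachable from $q_0$ we obtain an initially connected DFSM.

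It remains to show $L(N) \subseteq \tr(M)$, i.e.\ $N \sconf M$. Any trace of $N$ is either a prefix $\sigma_j$ of $\sigma$ — already in $\tr(M)$ by hypothesis — or has the form $\sigma_i \cdot x/y \cdot \tau'$ where the step $x/y$ leaves the spine at $q_i$ into some $s' \in S$ via $\delta(t_i,x) = (s',y)$, and $\tau'$ is the label of a path of $N$ from $s'$; by clause (iv) this $\tau'$ is simultaneously the label of a path of $M$ from $s'$, so $\tau' \in L_M(s')$. Concatenating $\rho_i$, the transition $(t_i,s',x/y)$, and this path of $M$ yields a path of $M$ with label $\rho_i \cdot x/y \cdot \tau' \in L(M)$. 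The key step is then the observation that $\sim$ is a congruence for concatenation: since $\pi_p(\alpha\beta) = \pi_p(\alpha)\pi_p(\beta)$ for every port $p$, $\sigma_i \sim \rho_i$ implies $\sigma_i \cdot x/y \cdot \tau' \sim \rho_i \cdot x/y \cdot \tau'$, whence $\sigma_i \cdot x/y \cdot \tau' \in \tr(M)$.

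I do not expect a serious obstacle here: the main point is simply to recognise that the local-trace equivalence $\sim$ is preserved under appending identical suffixes, which lets us lift a single witness $\rho_i$ for each prefix to witnesses for all extensions. The only minor care needed is in bookkeeping — making sure clauses (i)--(iv) do not conflict (they do not, because (i) and (ii) partition the inputs at $q_i$), and trimming unreachable states at the end so that $N$ satisfies the standing assumption of being initially connected.
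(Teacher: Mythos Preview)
Your proof is correct and follows essentially the same approach as the paper: build a spine realising $\sigma$, attach (a deterministic reduction of) a copy of $M$, route off-spine inputs at the $i$th spine state through the $M$-state $t_i$ reached by a witness $\rho_i \sim \sigma_i$, and conclude via the congruence $\sigma_i \sim \rho_i \Rightarrow \sigma_i\gamma \sim \rho_i\gamma$. The only cosmetic differences are that the paper first builds a possibly nondeterministic $N'$ and then passes to a deterministic sub-FSM, and that it sends the final spine transition directly into $s_k$ rather than keeping a separate state $q_k$.
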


\begin{proof}
Let $\sigma$ be a global trace of length $k$ and so $\sigma = x_1/y_1 \ldots
x_k/y_k$ for some $x_1, \ldots, x_k \in I$ and $y_1, \ldots, y_k
\in O$. For all $1 \leq i \leq k$ let $\sigma_i$ denote the prefix of $\sigma$ with
length $i$.
We will construct an FSM $N'$ in the following way.
First, define an initial state $s'_0$ and for every $1 \leq i < k$ we define a
state $s'_i$ and add the transition $(s'_{i-1},s'_i,x_i/y_i)$.

Since $\sigma \in \trpre (M)$,
for each state $s'_i$, $1 \leq i \leq k$,
we choose a (not necessarily unique) state of $M$,
which we call $s_i$,
that is reached from the initial state of $M$ using a global trace $\sigma'_i \sim \sigma_i$.
We add a copy of $M$ to the structure already defined and will use
transitions to the states of this copy of $M$ in order to complete
$N'$.

First, we add the transition $(s'_{k-1},s_k,x_k/y_k)$ so if we follow
$\sigma$ by further input in $N'$ then we will obtain $\sigma$ followed
by a global trace $\sigma'$ from $s_k$ in $M$.
For all $1 \leq i < k$, $x \in I \setminus \{ x_{i} \}$, and $(s',y) \in h(s_{i-1},x)$,
we add the transition $(s'_{i-1},s',x/y)$.
Every global trace in $N'$ is either a global trace
of $M$ or is  $\sigma_i \sigma'$ for a $\sigma_i$ (which is in $\tr (M)$) and a global trace
$\sigma'$ such that $\sigma' \in L_{M}(s_i)$ and so $\sigma \in \tr (M)$. 
Further, it is clear that there is some DFSM $N$ such that $L(N) \subseteq L(N')$
and $\sigma \in L(N)$.
Thus, $N$ is a DFSM with $L(N) \subseteq L(N') \subseteq \tr (M)$ and so we have that $N'  \sconf  M$ as required.
\end{proof}

Thus, $\trpre (M)$ is the smallest language such that $N \sconf M$ if and only if $L(N) \subseteq \trpre (M)$ and so
it appears to be the language we want.

Figure \ref{fig:lang_sufficient} shows two FSMs $M$ and $M' $ such that $M'= \trpre (M)$.
Since $M'= \trpre (M)$,
for an FSM $N$ we have that $N \sconf M$ if and only
if $L(N) \subseteq L(M')$. 
This works because the only global traces that are in $\tr (M)$ but not $L(M)$
are those in the form 
$(x_1/(y_1,-))^*x_2/(-,y'_2)((x_1/(y_1,-)) + x_2/(-,y'_2))^*$
and these are included in $L(M')$.

\begin{figure}
\begin{center}
\[\UseTips
\entrymodifiers = {}
\xymatrix @+0cm {
& *++[o][F]{s_1} \ar@(dl,dr)[]_{x_2/(-,y_2)} \ar@(ul,ur)[]^{x_1/(y_1,-)} &&& & *++[o][F]{s_1} \ar@(dl,dr)[]_{x_2/(-,y_2)} \ar@(ul,ur)[]^{x_1/(y_1,-)} \\
*++[o][F]{s_0} \ar[ru]^{x_1/(y_1,-)} \ar[rd]_{x_2/(-,y'_2)} &&&&*++[o][F]{s_0} \ar@(ul,dl)[]_{x_1/(y_1,-)} \ar[ru]^{x_1/(y_1,-)} \ar[rd]_{x_2/(-,y'_2)} \\
& *++[o][F]{s_2}  \ar@(dl,dr)[]_{x_2/(-,y'_2)} \ar@(ul,ur)[]^{x_1/(y_1,-)} &&& & *++[o][F]{s_2}  \ar@(dl,dr)[]_{x_2/(-,y'_2)} \ar@(ul,ur)[]^{x_1/(y_1,-)} \\ 
}
\]
\caption{DFSMs $M$ and $M'$}
\label{fig:lang_sufficient}
\end{center}
\end{figure}
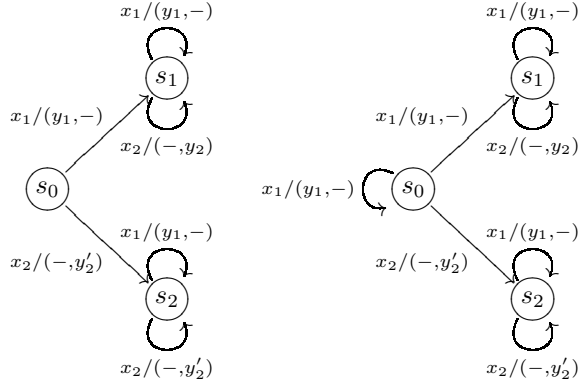

We have shown that we are required to keep all of the sequences in
$\trpre(M)$.
Now we  show that if $L = L(M')$ for some FSM $M'$ and we have that
$\trpre(M) \subset L(M')$ then the language $L(M')$ is too large.
We do this by proving that there can be a reduction $N$ of $M'$
that does not conform to $M$ under $\sconf$ and this is the case even if we
restrict $N$ to be deterministic.

\begin{proposition}\label{prop:langpre_M_cannot_add}
Given an FSM $M$, if $L' = L(M')$ for some FSM $M'$ and $\trpre(M) \subset
L'$ then there is an FSM $N$ such that $L(N) \subseteq L'$ but we do not have
that $N \sconf M$. Further, this result holds even if we restrict $N$ to being deterministic.
\end{proposition}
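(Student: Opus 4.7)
The plan is to locate a global trace $\tau$ that lies in $L'$ but outside $\tr(M)$, and then to construct a deterministic FSM $N$ with $\tau \in L(N)$ and $L(N) \subseteq L'$. Combining these, $L(N) \not\subseteq \tr(M)$, so the characterisation $N \sconf M \iff L(N) \subseteq \tr(M)$ recorded earlier immediately gives $N \not\sconf M$, which is exactly what we need.

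To find $\tau$, pick any $\sigma \in L' \setminus \trpre(M)$; such a $\sigma$ exists because $\trpre(M) \subset L'$. Since $L' = L(M')$ is prefix closed, $\pref(\sigma) \subseteq L'$. The membership condition defining $\trpre(M)$ then forces some element of $\pref(\sigma)$ (possibly $\sigma$ itself) to lie outside $\tr(M)$; choose any such prefix and call it $\tau = x_1/y_1 \cdots x_k/y_k$. By construction $\tau \in L'$ and $\tau \notin \tr(M)$.

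Next I build $N$ using the same skeleton as the proof of Proposition \ref{prop:DFSM_with_sigma}, but with $M'$ playing the role that $M$ played there. Fix a path of $M'$ labelled by $\tau$ and let $s_0,s_1,\ldots,s_k$ be its successive states. Introduce a fresh initial state $s'_0$ and fresh intermediate states $s'_1,\ldots,s'_{k-1}$; add the transitions $(s'_{i-1},s'_i,x_i/y_i)$ for $1 \leq i < k$ together with $(s'_{k-1},s_k,x_k/y_k)$ to splice back into $M'$. For every $s'_{i-1}$ and every input $x \neq x_i$ fix one successor $(s,y) \in h(s_{i-1},x)$ of $M'$ and add the transition $(s'_{i-1},s,x/y)$ into the copy of $M'$. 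Inside the retained copy of $M'$, restrict to a deterministic completely specified sub-relation of its transition relation, which exists because $M'$ is completely specified. The resulting $N$ is a DFSM containing $\tau$ as a trace, and every trace of $N$ is a trace of $M'$, so $L(N) \subseteq L(M') = L'$.

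Putting the pieces together, $L(N) \subseteq L'$ while $\tau \in L(N) \setminus \tr(M)$, so $L(N) \not\subseteq \tr(M)$ and hence $N \not\sconf M$, and the construction delivers a deterministic $N$ as required. The main delicate point is isolating $\tau$ so that it simultaneously witnesses membership in $L'$ and non-membership in $\tr(M)$; the prefix closure of $L'$ together with the definition of $\trpre(M)$ makes this immediate. The subsequent DFSM construction is a guided determinisation of $M'$ along the path of $\tau$ and is essentially routine once $\tau$ has been fixed.
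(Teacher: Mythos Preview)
Your proof is correct and follows essentially the same approach as the paper. The only cosmetic difference is that you first isolate the specific prefix $\tau \notin \tr(M)$ and then build $N$ around $\tau$, whereas the paper builds $N$ around $\sigma$ itself and then invokes prefix closure of $L(N)$ to conclude that the bad prefix is automatically present; your explicit DFSM construction (borrowed from Proposition~\ref{prop:DFSM_with_sigma}) simply spells out what the paper leaves as ``it is clear that there exists a DFSM $N$ such that $L(N)\subseteq L'$ and $\sigma\in L(N)$.''
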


\begin{proof}
Since $\trpre(M) \subset L'$ there is some global trace $\sigma \in
L(M') \setminus \trpre(M)$.
Since $L' = L(M')$ for an FSM $M'$,
it is clear that there exists a DFSM $N$
such that $L(N) \subseteq L'$ and $\sigma \in L(N)$.

It is now sufficient to observe that $\sigma$ and all of its
prefixes are in $L(N)$ and since $\sigma \not \in \trpre(M)$
we must have that at least one of these sequences is not in $\tr (M)$.
\end{proof}

Thus,
the language we are looking for must contain
$\trpre(M)$ and if we restrict attention to languages defined by FSMs,
then the language cannot contain any additional global traces\footnote{We can easily
extend the proofs to more general formalisms such as IOTSs.}.  
Unfortunately, however, $\trpre (M)$ need not be regular.

\begin{proposition}\label{prop:langpre_M_not_regular}
The language $\trpre (M)$ need not be regular.
\end{proposition}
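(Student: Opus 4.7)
The plan is to mirror the approach of Proposition \ref{prop:lang_M_not_regular}: exhibit a specific FSM $M$ together with a regular language $L'$, and show that $\trpre(M) \cap L'$ is not regular. Since regular languages are closed under intersection with regular languages, the non-regularity of the intersection would force $\trpre(M)$ itself to be non-regular.

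First, I would construct an FSM $M$ whose $\tr(M)$ is non-regular in the style of $M_4$---exploiting that two null-output inputs at distinct ports commute under $\sim$, so that the commutative closure of a regular language induces a non-regular counting constraint such as $\#x_1 = \#x_2$. The critical difference from $M_4$ is that the new $M$ must also allow prefixes of commutatively permuted sequences to be $\sim$-equivalent to elements of $L(M)$: otherwise the prefix condition collapses $\trpre(M)$ onto $L(M)$, which is regular. (Indeed, for $M_4$ itself one can check that the prefix condition forces strict alternation, giving $\trpre(M_4) = L(M_4)$.) I therefore intend to enrich $L(M)$ with additional branching transitions or paths, so that many rearrangements of a balanced sequence have every prefix accountable to some (possibly different) element of $L(M)$, while the global behaviour retains a non-regular counting condition.

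Second, I would choose a regular language $L'$ that isolates the non-regular part of $\trpre(M)$. Analogously to the use of the all-null-output traces in Proposition \ref{prop:lang_M_not_regular}, I would pick a set of global traces of a specified structural form---for instance all-null-output sequences, or sequences ending in a particular non-null event---within which membership in $\trpre(M)$ reduces to a pure counting constraint on the inputs. I would then verify that $\trpre(M) \cap L'$ consists of all sequences satisfying this counting constraint, and argue it is non-regular by intersecting further with a language like $x_1^* x_2^*$ and applying the pumping lemma to the resulting $\{x_1^n x_2^n : n \geq 0\}$-style language, as in the proof of Proposition \ref{prop:lang_M_not_regular}.

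The principal obstacle is the design of $M$ itself. The prefix-closure condition in the definition of $\trpre(M)$ is extremely restrictive, and naive constructions tend to yield $\trpre(M) = L(M)$ and hence a regular language. The construction must therefore provide enough ``witnesses'' inside $L(M)$ to certify each intermediate prefix of a permuted sequence, while still preserving the long-range balance condition that makes $\tr(M)$ non-regular. Producing such an FSM---and verifying carefully that its $\trpre$ really does contain the claimed non-regular counting language---is the delicate step, but once it is in hand the non-regularity follows from the intersection argument above.
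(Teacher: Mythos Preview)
Your high-level plan---intersect $\trpre(M)$ with a regular $L'$ and show the result is non-regular---is exactly what the paper does, and your diagnosis of the obstacle is also correct: for $M_4$ the prefix condition does collapse the interesting permutations and one essentially gets $L(M_4)$ back. Where your proposal diverges is in the remedy. You propose to ``enrich $L(M)$ with additional branching transitions'' so that every prefix of a permuted sequence has its own witness in $L(M)$; you then acknowledge that producing such an $M$ is ``the delicate step'' and leave it open. This is the gap: the construction you describe is not supplied, and it is not clear that a simple FSM meeting your specification exists.

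The paper sidesteps this difficulty with a much lighter idea that you do not mention: rather than engineering $M$ to provide prefix-witnesses, choose the regular filter $L'$ so that $\tr(M)\cap L'$ is \emph{already prefix closed}. Then automatically $\trpre(M)\cap L' = \tr(M)\cap L'$, and the non-regularity of the latter (which you already know how to prove) finishes the argument. Concretely, the paper uses a three-state DFSM $M_5$ that differs from $M_4$ only in having a self-loop $x_2/(-,-)$ at the initial state, and takes $L'=(x_2/(-,-))^*(x_1/(-,-))^*$. The intersection $\tr(M_5)\cap L'$ is the set of null-output traces $x_2^{\,b}x_1^{\,a}$ with $a\le b+1$; because every such trace has all of its $x_2$'s at the front, each prefix is again of this form and still satisfies $a\le b+1$, so the set is prefix closed. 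Hence $\trpre(M_5)\cap L'$ equals this non-regular set, and $\trpre(M_5)$ is non-regular. The point is that the ``witness'' problem is solved by the shape of $L'$, not by adding structure to $M$.
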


\begin{proof}
We will use the FSM $M_5$ shown in Figure \ref{fig:pre_not_regular}.
Proof by contradiction: assume that $\trpre (M_5)$ is regular.

Let $L'$ denote the regular language $(x_2/(-,-))^*(x_1/(-,-))^*$.
Consider the language $\tr (M_5) \cap L'$, which is the set of sequences of the form
$(x_2/(-,-))^*(x_1/(-,-))^*$ where the number of instances of $x_2$ can
be at worst one less than the number of instances of $x_1$.
This is prefix closed since each element of $\tr (M_5) \cap L'$
starts with all of its instances of $x_2/(-,-)$.
Clearly $\tr (M_5) \cap L'$ is not regular.

Since $\tr (M_5) \cap L'$  is prefix closed, $\trpre(M_5) \cap L' = \tr (M_5) \cap L'$.
As a result, we know that $\trpre(M_5) \cap L' $ is not regular.
But $L'$ is regular and so if $\trpre(M_5)$ was regular then $\trpre(M_5) \cap L'$ would also be regular.
This provides a contradiction as required.
\end{proof}

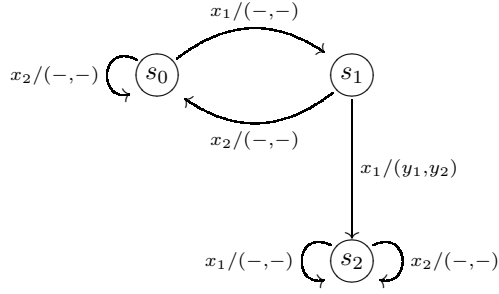
\begin{figure}
\begin{center}
\[\UseTips
\entrymodifiers = {}
\xymatrix @+1cm {
*++[o][F]{s_0} \ar@(ul,dl)[]_{x_2/(-,-)} \ar@/^1.5pc/[r]^{x_1/(-,-)} & *++[o][F]{s_1} \ar@/^1.5pc/[l]^{x_2/(-,-)} \ar[d]^{x_1/(y_1,y_2)} \\
& *++[o][F]{s_2}   \ar@(ul,dl)[]_{x_1/(-,-)}  \ar@(ur,dr)[]^{x_2/(-,-)} \\
}
\]
\caption{DFSM $M_5$}
\label{fig:pre_not_regular}
\end{center}
\end{figure}

We therefore obtain the following result.

\begin{theorem}
Given an FSM $M$ there need not exist an FSM $M'$
with the property that for all FSMs $N$ we have that
$N \sconf M$ if and only if $L(N) \subseteq L(M')$.
In addition, this result holds even if we restrict attention
to deterministic $N$.
\end{theorem}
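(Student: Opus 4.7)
The plan is to derive the theorem as a direct consequence of Propositions~\ref{prop:DFSM_with_sigma}, \ref{prop:langpre_M_cannot_add} and \ref{prop:langpre_M_not_regular}, using the FSM $M_5$ from Figure~\ref{fig:pre_not_regular} as the concrete witness. Fix $M = M_5$ and suppose, for contradiction, that there exists an FSM $M'$ such that for every FSM $N$ we have $N \sconf M \iff L(N) \subseteq L(M')$.

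First I would show $\trpre(M) \subseteq L(M')$. Take an arbitrary $\sigma \in \trpre(M)$. By Proposition~\ref{prop:DFSM_with_sigma} there is a deterministic FSM $N_\sigma$ with $N_\sigma \sconf M$ and $\sigma \in L(N_\sigma)$. The assumed biconditional then forces $L(N_\sigma) \subseteq L(M')$, and hence $\sigma \in L(M')$. Next I would show the reverse inclusion $L(M') \subseteq \trpre(M)$: if instead $\trpre(M) \subsetneq L(M')$, Proposition~\ref{prop:langpre_M_cannot_add} yields a deterministic FSM $N$ with $L(N) \subseteq L(M')$ but $N \not\sconf M$, contradicting the assumed biconditional. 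Combining the two inclusions gives $L(M') = \trpre(M)$.

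To finish, I would invoke the fact that the language of any FSM is regular, so $L(M')$ is regular; however Proposition~\ref{prop:langpre_M_not_regular} says $\trpre(M_5)$ is not regular. This contradiction refutes the existence of $M'$.

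For the addendum about restricting to deterministic $N$, no extra work is needed: both $N_\sigma$ and the counterexample $N$ supplied by Propositions~\ref{prop:DFSM_with_sigma} and \ref{prop:langpre_M_cannot_add} are explicitly deterministic, so exactly the same argument rules out an $M'$ that satisfies the biconditional merely on the class of deterministic FSMs. There is no real obstacle here—the work has already been done in the preceding propositions—so the proof is essentially an assembly step; the only minor point to be careful about is that each direction of the biconditional is used in one of the two inclusions, so we need the full biconditional (not just one implication) to drive the contradiction.
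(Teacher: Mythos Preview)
Your proposal is correct and follows essentially the same approach as the paper: the paper's proof is simply a terse two-line version of your argument, invoking Propositions~\ref{prop:DFSM_with_sigma} and \ref{prop:langpre_M_cannot_add} to force $L(M') = \trpre(M)$ and then Proposition~\ref{prop:langpre_M_not_regular} for the contradiction. Your explicit unpacking of the two inclusions, the choice of $M_5$ as witness, and the observation that both auxiliary propositions already supply deterministic $N$ are exactly the content behind the paper's compressed statement.
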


\begin{proof}
By Propositions \ref{prop:DFSM_with_sigma} and \ref{prop:langpre_M_cannot_add}
we know that we must have that $L(M') = \trpre (M)$.
The result thus follows from Proposition \ref{prop:langpre_M_not_regular}.
\end{proof}

If $\trpre(M)$ is regular then there is a
corresponding FSM. It would thus be interesting to explore
conditions under which $\trpre(M)$ is guaranteed to be regular
and also properties of $\trpre(M)$ when this is not regular.
There may also be scope to represent
$\trpre(M)$ using an IOTS.

\section{Conformance and multi-tape automata}\label{section:multitape}

While we can decide (in polynomial time) whether $N \wconf M$,
this is quite a weak conformance relation since it does not allow us to bring
together local traces observed at the separate ports.
It seems likely that normally the implementation relation $\sconf$ will be more suitable
and so we consider the problem of deciding whether $N \sconf M$.
In this section we study the problem of deciding language inclusion for
multi-tape automata;
in Section \ref{section:decide_strong}  we use the results proved here
regarding multi-tape automata to show that it is generally undecidable
whether $N \sconf M$ for FSMs $M$ and $N$.
We first define multi-tape FA \cite{rabin59}.

\begin{definition}
An $r$-tape FA with disjoint alphabets $\Sigma_i$, $1 \leq i \leq r$
is a tuple $(S,s_0,\Sigma,h,F)$ in which $S$ is a finite set of states, $s_0 \in S$ is the initial state $F \subseteq S$ is the set of final states and $h : S \times \bigcup_{i=1}^r \Sigma_i  \times S$ is the transition relation.
Then $A$ accepts a tuple $(w_1, \ldots, w_r) \in \Sigma_1^* \times \ldots \times \Sigma_r^*$ if and only if there is some sequence $\sigma \in (\bigcup_{i=1}^r \Sigma_i)^*$ that takes $A$ to a final state such that $\pi_i(\sigma) = w_i$ for all $1 \leq i \leq r$.
We let $\tr(N)$ denote the set of tuples accepted by $N$.
\end{definition}

Given a multi-tape FA $N$ with $r$ tapes,
we will let $L(N)$ denote the language in $(\Sigma_1 \cup \ldots \cup \Sigma_r)^*$
of the corresponding FA.
We obtain $L(N)$ by treating $N$ as a FA with alphabet $\Sigma = \Sigma_1 \cup \ldots \cup \Sigma_r$.

It might seem that deciding whether $N \sconf M$
is similar to deciding whether, for two multi-tape FA $N'$ and $M'$,
the language defined by $N'$ is a subset of that defined by $M'$.
This problem, regarding multi-tape FA, is known to be undecidable \cite{rabin59}.
However, the proof of the result regarding multi-tape FA uses FA in which
not all states are final and in FSMs there is no concept of a state not being a final state.
Thus, the results of \cite{rabin59} are not directly applicable to the problem of deciding whether $N \sconf M$
for FSMs $N$ and $M$ and it appears that the corresponding problem, for multi-tape FA in which all states
are final, has not previously been solved.
In this section we prove that language inclusion is generally undecidable for
multi-tape FA in which all states are final states.
Before we consider decidability issues, 
we investigate the corresponding languages and closure properties.

\begin{proposition}\label{prop:closure}
Let us suppose that $N_1=(S,s_0,h_1,S)$ and $N_2 = (Q,q_0,h_2,Q)$ are multi-tape FA
with the same number of tapes and the same alphabets
and also that all of the states of $N_1$ and $N_2$ are final states.
Then we have the following.
\begin{enumerate}

\item There exists a multi-tape FA $M$ such that $\tr(M) = \tr(N_1) \cup \tr(N_2)$ and all
of the states of $M$ are final states.

\item There exists a multi-tape FA $M$ such that $\tr(M) = \tr(N_1)\tr(N_2)$ and all
of the states of $M$ are final states.

\item There may not exists a multi-tape FA $M$ such that $\tr(M) = \tr(N_1) \setminus \tr(N_2)$ and all
of the states of $M$ are final states.

\item There may not exists a multi-tape FA $M$ such that $\tr(M) = \tr(N_1) \cap \tr(N_2)$ and all
of the states of $M$ are final states.
\end{enumerate}
\end{proposition}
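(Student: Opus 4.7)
My plan is to handle the four parts of the proposition separately. Parts 1 and 2 follow from direct automaton constructions that preserve the ``all states final'' property, while parts 3 and 4 are handled by a single carefully chosen pair $(N_1,N_2)$ that breaks both the intersection and the difference closure.

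For the union in part~1, I would introduce a fresh initial state $r_0$ (declared final), keep the states and transitions of $N_1$ and $N_2$ as a disjoint union, and add a transition $(r_0,a,s)$ for every $(s_0,a,s)\in h_1$ and $(r_0,a,q)$ for every $(q_0,a,q)\in h_2$. Any path from $r_0$ either makes no move (witnessing the all-empty tuple, which is already in both $\tr(N_1)$ and $\tr(N_2)$ because $s_0\in S$ and $q_0\in Q$ are final) or enters $S$ or $Q$ on its first move and then mimics a path of $N_1$ or $N_2$ from its initial state, so $\tr(M)=\tr(N_1)\cup\tr(N_2)$. For the concatenation in part~2, I would take the disjoint union of state sets with $s_0$ as initial state, keep all original transitions, and for every $s\in S$ and every $(q_0,a,q)\in h_2$ add the ``jump'' transition $(s,a,q)$. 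All states of $M$ remain final. Any accepting path either stays entirely in $S$ (yielding some $u\in\tr(N_1)$, concatenated componentwise with the all-empty tuple of $\tr(N_2)$) or at some moment jumps into $Q$ and runs there, yielding $u\cdot v$ with $u\in\tr(N_1)$ and $v\in\tr(N_2)$; conversely every such $u\cdot v$ can be simulated by following a witness for $u$ in $N_1$ and then jumping on the first letter of a witness for $v$.

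For parts~3 and~4 the key observation is that when every state of $N$ is final, $\tr(N)=\{(\pi_1(\sigma),\ldots,\pi_r(\sigma)):\sigma\in L(N)\}$ with $L(N)$ a prefix-closed regular language over $\bigcup_i\Sigma_i$; in particular $\tr(N)$ always contains the all-empty tuple, and for any witness $\sigma$ of a tuple in $\tr(N)$ every prefix of $\sigma$ witnesses another tuple of $\tr(N)$. I would then take $r=2$, $\Sigma_1=\{a\}$, $\Sigma_2=\{b\}$, and let $N_1,N_2$ be small cycles with $L(N_1)$ the prefix closure of $(ab)^*$ and $L(N_2)$ the prefix closure of $(ba)^*$, giving
\[
\tr(N_1)=\{(a^i,b^j):0\le i-j\le 1\},\qquad \tr(N_2)=\{(a^i,b^j):0\le j-i\le 1\}.
\]
Then $\tr(N_1)\cap\tr(N_2)=\{(a^i,b^i):i\ge 0\}$: in any putative $M$ realising this intersection, a witness $\sigma$ for the tuple $(a,b)$ must be either $ab$ or $ba$, so its length-one prefix would witness $(a,\epsilon)$ or $(\epsilon,b)$ in $\tr(M)$, contradicting $\tr(M)=\{(a^i,b^i):i\ge 0\}$. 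Likewise $\tr(N_1)\setminus\tr(N_2)=\{(a^{j+1},b^j):j\ge 0\}$ does not contain the all-empty tuple, and so cannot equal $\tr(M)$ for any multi-tape FA $M$ with all states final.

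The main obstacle is choosing $N_1$ and $N_2$ cleanly: each must itself be honestly realisable with all states final, and yet their combination must violate the ``every sequence prefix of a witness is again a witness'' consequence of that property. The $(ab)^*/(ba)^*$ pair achieves this because each side one-sidedly bounds the gap between $a$- and $b$-counts, their overlap collapses exactly onto the diagonal (ruling out any legal first letter), and their asymmetric difference strictly excludes $(\epsilon,\epsilon)$, so both negative closure results fall out of the same example.
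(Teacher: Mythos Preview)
Your proof is correct and follows essentially the same approach as the paper: the union construction is identical, the concatenation construction is a minor variant (the paper jumps to $q_0$ on the last letter of the $N_1$-segment rather than on the first letter of the $N_2$-segment as you do), and parts~3 and~4 rest on exactly the same prefix-closure observation you spell out. The only cosmetic difference is in the counterexamples: for part~4 the paper uses the smaller finite pair $L(N_1)=\{\epsilon,a_1,a_1a_2\}$, $L(N_2)=\{\epsilon,a_2,a_2a_1\}$ (so $\tr(N_1)\cap\tr(N_2)=\{(\epsilon,\epsilon),(a_1,a_2)\}$) and handles part~3 by a separate one-line remark, whereas your $(ab)^*/(ba)^*$ pair is a nice economy that dispatches both negative parts at once.
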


\begin{proof}
We will use $A \oplus B$, for sets $A$ and $B$,
to denote the disjoint union of $A$ and $B$.
For the first result it is sufficient to define the FA $(S \oplus Q \oplus \{r_0\},r_0,h',S \oplus Q \oplus \{r_0\})$
for $r_0 \not \in S \oplus Q$,
in which $h$ is the union of $h_1$ and $h_2$ plus the following transitions:
for every $(s_0,a,s) \in h_1$ we include in $h'$ the tuple $(r_0,a,s)$; and
for every $(q_0,a,q) \in h_2$ we include in $h'$ the tuple $(r_0,a,q)$.

For the second part, it is sufficient to take the disjoint union of $N_1$ and $N_2$ and
for every transition $(s,s',a)$ of $N_1$ add a transition $(s,q_0,a)$.

For the third part,
it is sufficient to observe that for any choice of $N_1$ and $N_2$ we have that
the empty sequence is in $\tr(N_1)$ and $\tr(N_2)$ and so not in
$\tr(N_1) \setminus \tr(N_2)$.
Thus, it is sufficient to choose any $N_1$ and $N_2$ such that $\tr(N_1) \setminus \tr(N_2)$
is non-empty.

For the last part, let us suppose that we have two tapes with alphabets $\{a_1\}$ and $\{a_2\}$,
let $L(N_1) = \{\epsilon, a_1, a_1a_2\}$ and let $L(N_2) = \{\epsilon,a_2,a_2a_1\}$
and so $\tr(N_1) \cap \tr(N_2) = \{\epsilon,a_1a_2,a_2a_1\}$.
\end{proof}

We will use Post's Correspondence Problem to prove that
language inclusion is undecidable.

\begin{definition}
Given sequences $\alpha_1, \ldots, \alpha_m$ and $\beta_1, \ldots, \beta_m$
over an alphabet $\Sigma$,
Post's Correspondence Problem (PCP) is to decide whether there is a sequence
$i_1, \ldots, i_k$ of indices from $[1,m]$  such that
$\alpha_{i_1} \ldots \alpha_{i_k} = \beta_{i_1} \ldots \beta_{i_k}$.
\end{definition}

It is known that Post's Correspondence Problem is undecidable \cite{post46}.

\begin{theorem}\label{theorem:PCP}
Post's Correspondence Problem is undecidable.
\end{theorem}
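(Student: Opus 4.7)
The plan is to recognise that this is the classical result of Post \cite{post46}, so the proof amounts to recalling the standard reduction from another undecidable problem, most conveniently the halting problem for Turing machines. Given a Turing machine $T$ and input $w$, I would construct a PCP instance whose alphabet contains $T$'s tape alphabet, its state symbols, and a fresh separator $\#$, so that a valid matching exists if and only if $T$ halts on $w$.

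The main construction proceeds in two stages. First I would reduce the halting problem to the \emph{Modified} PCP (MPCP), in which the first pair used in the matching is fixed. The fixed initial pair would be designed to force the bottom string to begin with an encoded initial configuration $\#q_0 w\#$. The remaining pairs would consist of (a) copy pairs $(a,a)$ for each tape symbol $a$, (b) a separator pair $(\#,\#)$, (c) simulation pairs encoding each transition of $T$, e.g.\ $(qa, bq')$ for $\delta(q,a)=(q',b,R)$, and (d) a family of terminating pairs that eat symbols around a halting state. The invariant maintained by any partial matching is that the bottom string is always exactly one configuration ahead of the top string; only when $T$ halts can the terminating pairs be applied to let the top string finally catch up. The second stage is to reduce MPCP back to PCP by interleaving a delimiter $*$ in the two components and adding boundary pairs, so that the only matching that can start correctly is the one beginning with the designated initial pair.

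The main obstacle is getting the simulation pairs exactly right: one must argue that any valid matching is forced, step by step, to encode a legal computation of $T$, with no opportunity for the prover to cheat by mixing pairs out of order. This rigidity comes from the strict one-configuration lag invariant, which constrains the next pair uniquely up to the current head position. Since a complete and careful write-up is a textbook exercise rather than a novel contribution of this paper, I would simply state the theorem and defer to the classical reference \cite{post46}.
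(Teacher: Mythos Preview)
Your proposal is correct and in fact ends up at exactly the same place as the paper: the paper gives no proof at all for this theorem, merely stating it as a known result with the citation \cite{post46}. Your sketch of the classical reduction via MPCP is perfectly standard and sound, but since you yourself conclude by deferring to the reference, you and the paper are in complete agreement.
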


We now prove the main result from this section.

\begin{theorem}\label{theorem1}
Let us suppose that $N$ and $M$ are multi-tape FA in which all states are final states.
The following problem is undecidable, even when there are only two tapes:
do we have that $\tr(N) \subseteq \tr(M)$?
\end{theorem}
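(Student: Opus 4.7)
The plan is to reduce from Post's Correspondence Problem, which is undecidable by Theorem~\ref{theorem:PCP}. Given a PCP instance with word pairs $\alpha_1, \ldots, \alpha_m, \beta_1, \ldots, \beta_m$ over an alphabet $\Sigma$, I will construct two 2-tape FA $N$ and $M$, each with all states final, such that $\tr(N) \subseteq \tr(M)$ if and only if the PCP instance has no solution.

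I choose tape 1 alphabet $\Sigma_1 = \Sigma$ and tape 2 alphabet $\Sigma_2 = \bar\Sigma$, a disjoint primed copy of $\Sigma$. The FA $N$ is a small two-state cycle alternating a read of $a \in \Sigma$ on tape 1 with a read of its primed image $\bar a \in \Sigma_2$ on tape 2; since all states are final, its tuple language contains every ``identity'' tuple $(w, \bar w)$ for $w \in \Sigma^*$. The FA $M$ is built as the union $M_\alpha \cup M_\beta$, which by Proposition~\ref{prop:closure} is again a multi-tape FA with all states final. Each $M_\alpha$ (resp.\ $M_\beta$) is a nondeterministic 2-tape FA designed so that $(w, \bar w) \in \tr(M_\alpha)$ precisely when $w$ is not a concatenation $\alpha_{i_1} \cdots \alpha_{i_k}$ (resp.\ $\beta_{i_1} \cdots \beta_{i_k}$) for any index sequence. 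Internally, $M_\alpha$ performs a greedy nondeterministic match of prefixes of $w$ against the $\alpha_i$'s while tape 2 is read freely, and upon witnessing either a character disagreement or an overflow past the current matched boundary it transitions to a ``deviation confirmed'' sink that freely accepts any remaining input.

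Correctness of the reduction then reads: an identity tuple $(w, \bar w) \in \tr(N)$ lies in $\tr(M)$ iff $w$ fails at least one of $\alpha$- or $\beta$-decomposition, so $\tr(N) \subseteq \tr(M)$ iff no $w$ is simultaneously an $\alpha$- and a $\beta$-concatenation for the same index sequence, i.e., iff the PCP instance has no solution. Undecidability of the inclusion problem for 2-tape FA with all states final then follows immediately by contraposition.

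The main obstacle is the ``all states final'' constraint, which forces the string language of each FA to be prefix closed: the naive construction of an FA accepting the complement of the set of $\alpha$-concatenations fails, since that complement is not prefix closed. The resolution is to decompose each of $M_\alpha, M_\beta$ into several sub-automata---one for each qualitatively distinct way a mismatch can manifest (a matched prefix followed by a forbidden character, an overflow past the boundary, and so on)---and combine them via the union closure of Proposition~\ref{prop:closure}. The core technical verification, and hence the main piece of work, is to show that when $w$ is genuinely an $\alpha$-concatenation \emph{every} interleaving of $(w, \bar w)$ gets stuck somewhere in $M_\alpha$ (and similarly in $M_\beta$ when $w$ is a $\beta$-concatenation), so that $M$ faithfully rejects exactly the identity tuples corresponding to PCP solutions.
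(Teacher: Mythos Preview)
Your reduction has a genuine gap: the coupling of index sequences is lost. As you describe them, $M_\alpha$ tests whether $w$ admits \emph{some} decomposition $\alpha_{i_1}\cdots\alpha_{i_k}$, and $M_\beta$ independently tests whether $w$ admits \emph{some} decomposition $\beta_{j_1}\cdots\beta_{j_l}$. Nothing in your construction records or compares the index sequences, so the condition ``$(w,\bar w)\notin\tr(M_\alpha)\cup\tr(M_\beta)$'' is equivalent to ``$w$ lies in $\{\alpha_1,\ldots,\alpha_m\}^*\cap\{\beta_1,\ldots,\beta_m\}^*$'', not to ``$w=\alpha_{i_1}\cdots\alpha_{i_k}=\beta_{i_1}\cdots\beta_{i_k}$ for the \emph{same} indices''. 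The former is a regular intersection-emptiness question and is decidable, so your reduction cannot establish undecidability. The phrase ``for the same index sequence'' in your correctness paragraph is precisely the step that is not supported by the automata you built.

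The paper's proof repairs this by placing the index coupling in $N$ rather than in $M$: $N$ generates tuples whose first-tape content is (a copy of) $\alpha_{i_1}\cdots\alpha_{i_k}$ and whose second-tape content is (a copy of) $\beta_{i_1}\cdots\beta_{i_k}$ with the \emph{same} indices $i_1,\ldots,i_k$, bracketed on tape~1 by end-markers $x,x'$. The automaton $M$ is then designed so that among tuples carrying both markers it accepts exactly those where the two $\Sigma$-words differ; tuples missing the closing marker $x'$ are always accepted. Thus $\tr(N)\setminus\tr(M)$ is nonempty iff some index sequence makes the $\alpha$- and $\beta$-words equal, i.e., iff the PCP instance has a solution. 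The end-markers are what circumvent the prefix-closure obstacle you correctly flagged: they let $M$ distinguish ``still a proper prefix'' (always in $\tr(M)$) from ``completed and equal'' (not in $\tr(M)$), without $M$ ever needing to reject a prefix of something it accepts.
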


\begin{proof}
We will show that if we can solve this problem then we can also solve
Post's Correspondence Problem.
We therefore assume that we have been given an instance of the PCP
with sequences $\alpha_1, \ldots, \alpha_m$ and $\beta_1, \ldots, \beta_m$ with
alphabet $\Sigma$.
To allow elements of $\Sigma$ to be on both tapes we use two disjoint copies,
$\Sigma_1 = \{f_1(a) | a \in \Sigma\}$ and $\Sigma_2 = \{f_2(a) | a \in \Sigma\}$,
of $\Sigma$.
Given a sequence $x_1 \ldots x_i$ and $j \in \{1,2\}$ we let
$f_j(x_1 \ldots x_i)$ denote $f_j(x_1) \ldots f_j(x_i)$.

We consider multi-tape automata with two tapes with
alphabets $\Sigma_1 \cup \{x, x'\}$ and $\Sigma_2$ respectively,
where $x, x'$ are chosen so that they are not in $\Sigma_1 \cup \Sigma_2$.
We let $N$ denote an FA such that $L(N)$ is the language
that contains all sequences in the regular language
$x((f_1(\alpha_{i_1})f_2(\beta_{i_1}) + \allowbreak \ldots + \allowbreak (f_1(\alpha_{i_k})f_2(\beta_{i_k}))^*x'$ and all prefixes of such sequences.
Clearly, such an FA $N$ exists.
Consider all sequences in $\tr(N)$ that contain an $x$ and an $x'$
and let $\sigma_a$ and $\sigma_b$ be the sequences in the two tapes with $x$ and $x'$ removed.
Then we must have that $\sigma_a$ is of the form $f_1(\alpha_{i_1}) \ldots f_1(\alpha_{i_k})$ and $\sigma_b$
is of the form $f_2(\beta_{i_1}) \ldots f_2(\beta_{i_k})$ with $i_1, \ldots, i_k \in [1,m]$.
In addition, all such combinations correspond to sequences in $L(N)$.
Thus, there is a solution to this instance of the PCP if and only if
$\tr(N)$ contains a tuple $(x\sigma_ax',\sigma_b)$ in which $\sigma_a = \sigma_b$.

The FA $M$ that defines language $L(M)$ is shown in Figure \ref{fig:m} in which
$(a,-)$ denotes elements all of the form $f_1(a)$ and $(-,a)$
denotes all elements of the form $f_2(a)$, $a \in \Sigma$.
Further, $(a,b)$ denotes sequences of the form $f_1(a)f_2(b)$ with $a,b \in \Sigma$ and $a \neq b$.
We use $(a,a)$ to denote sequences  of the form $f_1(a)f_2(a)$ with $a \in \Sigma$.
In all cases where we have sequences with length $2$,
this represents a cycle of length $2$.

\begin{figure}
\begin{center}
\[\UseTips
\entrymodifiers = {}
\xymatrix @+1cm {
*+++[o][F]{s_0} \ar[rd]^{x} & *+++[o][F]{s_1} \ar@(ul,ur)[]^{(-,a)}  \ar[r]^{x'} & *+++[o][F]{s_4} \\
 & *+++[o][F]{s'_0} \ar@(ul,dl)[]_{(a,a)} \ar[u]^{(-,a)} \ar[d]_{(a,-)} \ar[r]^{(a,b)} & *+++[o][F]{s_2} \ar@(ul,ur)^{(a,-),(-,a)}  \ar[r]^{x'} & *+++[o][F]{s_5} \\
& *+++[o][F]{s_3}  \ar@(dl,dr)[]_{(a,-)}   \ar[r]^{x'} & *+++[o][F]{s_5}
}
\]
\caption{Finite Automaton $M$} \label{fig:m}
\end{center}
\end{figure}
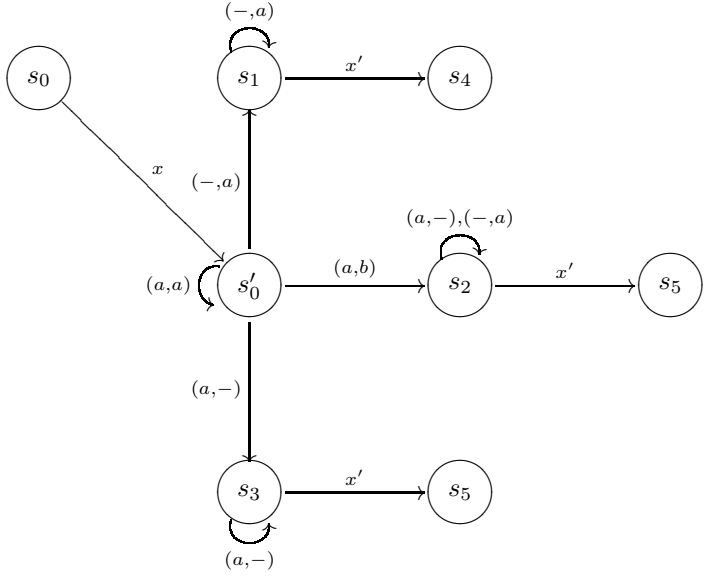

Now consider the problem of deciding whether $\tr(N) \subseteq \tr(M)$.
Specifically, we will focus on the problem of deciding whether $\tr(N)$ is not a subset of $\tr(M)$
and so $\tr(N) \setminus \tr(M)$ is non-empty.
First, note that in $\tr(M)$ we have:
\begin{enumerate}
\item The language defined by paths that pass through $s_1$ contains
all tuples of the form $(xf_1(w_1),f_2(w_2))$ or $(xf_1(w_1)x',f_2(w_2))$
such that $w_1, w_2 \in \Sigma^*$ and $w_1$ is a strict prefix of $w_2$.

\item The language defined by paths that pass through $s_3$ contains all
tuples of the form $(xf_1(w_1),f_2(w_2))$ or $(xf_1(w_1)x',f_2(w_2))$
such that $w_1, w_2 \in \Sigma^*$ and $w_2$ is a strict prefix of $w_1$.

\item The language defined by paths that pass through $s_2$ contains all
tuples of the form
 $(xf_1(w_1w_3),f_2(w_2w_4))$ or $(xf_1(w_1w_3)x',f_2(w_2w_4))$ 
 in which $w_1 \in \Sigma^*$ and $w_2 \in \Sigma^*$ have the same length,
$w_1 \neq w_2$,  and $w_3, w_4 \in \Sigma^*$.

\item The language defined by paths that do not leave $s_0$ contains all tuples of the form
$(xf_1(w),f_2(w))$.
\end{enumerate}

Consider tuples in $\tr(M)$ that do not contain $x'$ and thus tuples of the form
$(xf_1(w_1),f_2(w_2))$ in which $w_1, w_2 \in \Sigma^*$.
Then paths that pass through $s_1$ define all such
tuples in which $w_1$ is a strict prefix of $w_2$ and
paths that pass through $s_3$ define all such
tuples in which $w_2$ is a strict prefix of $w_1$.
In addition, paths that do not leave $s_0$ define
all such tuples in which $w_1 = w_2$ and paths
that pass through $s_2$ define all such tuples in which
$w_1$ and $w_2$ differ after a (possibly empty) common prefix.
Thus, $\tr(M)$ defines all tuples of the form $(xf_1(w_1),f_2(w_2))$ in which $w_1, w_2 \in \Sigma^*$
and so all tuples in $\tr(N)$ that do not contain $x'$ are also in $\tr(M)$.

Now, consider the tuples in $\tr(M)$ that contain $x'$.
These are of the form $(xf_1(w_1)x',f_2(w_2))$ and are
defined by paths that pass through $s_1$, $s_2$ and $s_3$.
By examining the languages defined by paths that pass through
these states we find that
$\tr(M)$ contains the set of tuples of this form in which $w_1 \neq w_2$.
Thus, $\tr(N) \setminus \tr(M)$ is non-empty if and only if
$\tr(N)$ contains a tuple of the form $(xf_1(w)x',f_2(w))$.
But we know that this is the case if and only if there is
a solution to this instance of the PCP and so the result follows
from Theorem \ref{theorem:PCP}.
\end{proof}

For the sake of completeness, we now prove some additional decidability
results regarding multi-tape automata in which all states are final.

\begin{theorem}\label{theorem2}
Let us suppose that $N$ and $M$ are multi-tape FA in which all states are final states.
The following problem is undecidable, even when there are only three tapes:
do we have that $\tr(N) \cap \tr(M)$ contains a non-empty sequence.
\end{theorem}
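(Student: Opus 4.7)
I would reduce from Post's Correspondence Problem (Theorem~\ref{theorem:PCP}), in the spirit of Theorem~\ref{theorem1}. Given an instance with sequences $\alpha_1,\ldots,\alpha_m$ and $\beta_1,\ldots,\beta_m$ over $\Sigma$, I would build three-tape FA $N$ and $M$, each with all states final, so that $\tr(N)\cap\tr(M)$ contains a non-empty tuple iff the instance has a solution.

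I would use three disjoint alphabets: tape~1 a copy $\Gamma=\{g_1,\ldots,g_m\}$ of the index set, tape~2 a copy $\Sigma_A$ of $\Sigma$, and tape~3 a small ``commitment'' alphabet containing a marker $\#$. Both $N$ and $M$ would process input chunk by chunk; for each index $i$, $N$'s chunk reads $g_i$ on tape~1, then $f_A(\alpha_i)$ character by character on tape~2, then $\#$ on tape~3, before returning to $s_0$, while $M$'s chunk is identical except that $f_A(\beta_i)$ replaces $f_A(\alpha_i)$. For the forward direction, a PCP solution $(i_1,\ldots,i_k)$ immediately yields the non-empty tuple $(g_{i_1}\cdots g_{i_k},\, f_A(\alpha_{i_1}\cdots\alpha_{i_k}),\, \#^k)\in\tr(N)\cap\tr(M)$, since $\alpha_{i_1}\cdots\alpha_{i_k}=\beta_{i_1}\cdots\beta_{i_k}$.

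The reverse direction is the main obstacle, since, by the all-states-final hypothesis, both $N$ and $M$ accept tuples in which one is stopped mid-chunk. I would use the commitment marker on tape~3 to force any shared tuple to have the same number of $\#$'s, and hence the same number of completed chunks, on both sides. To stop within-chunk partial tuples from producing spurious intersection elements, I would stagger the within-chunk orderings between $N$ and $M$ --- for example, having $N$ read tape~1 before tape~2 in each chunk while $M$ reads tape~2 before tape~1 --- so that a partial tuple in the middle of a chunk has an $N$-signature and an $M$-signature that can coincide only when they are compatible with a length-one PCP match. A case analysis on the states at which $N$ and $M$ can produce a shared tuple then shows that every non-empty tuple in the intersection forces matching indices on tape~1 and matching $\alpha$- and $\beta$-concatenations on tape~2, i.e.\ a PCP solution.

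The hard part is precisely this reverse-direction analysis: the combination of the commitment tape and the opposed within-chunk orderings has to rule out the many partial tuples that an all-final construction generates, without accidentally ruling out the genuine PCP tuples. This is also essentially the role of the third tape, which explains why the argument is phrased for three tapes rather than two; without a dedicated synchronization tape, partial-read artefacts can create non-empty intersection tuples independently of whether a PCP solution exists.
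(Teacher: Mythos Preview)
Your overall plan---reduce from PCP and use the third tape for synchronisation---matches the paper's, but the specific construction you sketch has a genuine gap in the reverse direction.

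The staggering of the within-chunk orders does not eliminate all spurious intersections, because one automaton can sit at a chunk boundary while the other is partway through tape~2 of its next chunk. Concretely, take $m=2$, $\alpha_1=ab$, $\alpha_2=bb$, $\beta_1=a$, $\beta_2=b$. Since $|\alpha_i|>|\beta_i|$ for every $i$, no index sequence can give $\alpha_{i_1}\cdots\alpha_{i_k}=\beta_{i_1}\cdots\beta_{i_k}$, so the PCP instance has no solution. Yet with your construction (chunks $g_i\,f_A(\alpha_i)\,\#$ in $N$, and $f_A(\beta_j)\,g_j\,\#$ in $M$), the tuple $(g_1,\,f_A(ab),\,\#)$ lies in both $\tr(N)$ and $\tr(M)$: in $N$ it is the state after one full chunk with index~1; in $M$ it is the state after one full chunk with index~1 followed by reading $f_A(b)$ (the tape-2 part of a second chunk with index~2) but not yet $g_2$. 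So $\tr(N)\cap\tr(M)$ contains a non-empty tuple although no PCP solution exists. The case you label ``compatible with a length-one PCP match'' is therefore not the only leak; the mismatch between a \emph{completed} $N$-run and a \emph{mid-chunk} $M$-run on tape~2 already breaks the equivalence.

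The paper avoids this entirely with a different device: instead of a repeated per-chunk marker, the third tape carries a \emph{single} symbol $x'$, which is the \emph{last} symbol read in $N$ (after at least one chunk) and the \emph{first} symbol read in $M$. Every non-empty tuple of $\tr(M)$ then contains $x'$, so any non-empty element of the intersection must come from a fully completed run of $N$; partial-chunk artefacts simply cannot appear, and the case analysis collapses. If you want to salvage your chunk-based design, you would need an analogous ``end-of-run'' commitment that forces both automata to be at a chunk boundary simultaneously; a per-chunk $\#$ on its own cannot do that.
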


\begin{proof}
We assume that we have been given an instance of the PCP
with sequences $\alpha_1, \ldots, \alpha_m$ and $\beta_1, \ldots, \beta_m$ with
alphabet $\Sigma$ and follow an approach similar to that used
in the proof of Theorem \ref{theorem1}.
To allow elements of $\Sigma$ on two tapes we use two copies of elements of $\Sigma$ and let
$\Sigma_1 = \{f_1(a) | a \in \Sigma\} \cup \{x\}$, $\Sigma_2 = \{f_2(a) | a \in \Sigma\}$,
and $\Sigma_3 = \{x'\}$.
Given a sequence $a_1, \ldots, a_i \in \Sigma^*$ and $j \in \{1,2\}$ we let
$f_j(a_1 \ldots a_i)$ denote $f_j(a_1) \ldots f_j(a_i)$.

We consider multi-tape automata with three tapes with
alphabets $\Sigma_1$, $\Sigma_2$, and $\Sigma_3$.
We let $N$ denote such an FA such that $L(N)$
contains all tuples formed by sequences in the regular language
$x((f_1(\alpha_{i_1})f_2(\beta_{i_1}) + \allowbreak \ldots + \allowbreak (f_1(\alpha_{i_k})f_2(\beta_{i_k}))^+x'$ and all prefixes of such sequences.
Note that $xx'$ is not contained in $L(N)$.
In addition, we let $M$ denote the multi-tape automaton defined by
the following:
\begin{enumerate}
\item There is a transition from $s_0$ to state $s_1$ and this has label $x'$; 

\item For all $a\in\Sigma$ there is a cycle starting and ending at $s_1$ with label $f_1(a)f_2(a)$;

\item There is a transition from $s_1$ to $s_2$, not involved in the cycles, with label $x$.

\item There are no transition from $s_2$.
\end{enumerate}

Thus, all elements of $\tr(M)$ are either $(\epsilon,\epsilon,\epsilon)$ or contain $x'$.
As a result,
if there is a non-empty element of
$\tr(N) \cap \tr(M)$ then this must contain both $x$ and $x'$.
It is now sufficient to observe that 
this is the case if and only if there is a solution to this instance of the PCP.
\end{proof}

Finally, we prove that equivalence is undecidable for multi-tape FA
in which all states are final states.

\begin{theorem}\label{theorem3}
Let us suppose that $N$ and $M$ are multi-tape FA in which all states are final states.
The following problem is undecidable, even when there are only two tapes:
do we have that $\tr(N) = \tr(M)$?
\end{theorem}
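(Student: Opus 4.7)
The plan is to reduce the inclusion problem of Theorem \ref{theorem1} to the equivalence problem, using the closure of multi-tape FA with all states final under union (part 1 of Proposition \ref{prop:closure}). This is the standard trick: equivalence of $\tr(N')$ and $\tr(M)$ captures one-sided inclusion $\tr(N) \subseteq \tr(M)$ as soon as we can form an FA for $\tr(N) \cup \tr(M)$ inside the same class.

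Concretely, suppose for contradiction that we had a decision procedure for equivalence of two-tape FA in which all states are final. Given any instance $(N, M)$ of the inclusion problem of Theorem \ref{theorem1} (both two-tape, all states final, same alphabets), I would invoke Proposition \ref{prop:closure}(1) to construct a two-tape FA $N'$, again with all states final, such that $\tr(N') = \tr(N) \cup \tr(M)$. I would then feed the pair $(N', M)$ to the hypothetical equivalence decider.

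The correctness of the reduction is the identity $\tr(N) \subseteq \tr(M) \iff \tr(N) \cup \tr(M) = \tr(M) \iff \tr(N') = \tr(M)$, together with the observation that $N'$ lies in the same class (two-tape FA with all states final) and is constructible from $N$ and $M$. Thus a decision procedure for equivalence would yield one for inclusion, contradicting Theorem \ref{theorem1}, and the undecidability of equivalence follows, with the same bound of two tapes inherited from Theorem \ref{theorem1}.

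There is no real obstacle here: the construction of Proposition \ref{prop:closure}(1) is already given (add a fresh initial state with the appropriate outgoing transitions simulating those out of $s_0$ and $q_0$) and preserves both the number of tapes and the property that every state is final, so the reduction is entirely routine. The only thing to be careful about is that the union construction does not inadvertently introduce non-final states or alter the alphabet, but this is immediate from the construction described in the proof of Proposition \ref{prop:closure}.
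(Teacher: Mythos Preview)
Your proposal is correct and essentially identical to the paper's own proof: both reduce inclusion (Theorem~\ref{theorem1}) to equivalence via the identity $\tr(N)\subseteq\tr(M)\iff\tr(N)\cup\tr(M)=\tr(M)$, using the closure under union from Proposition~\ref{prop:closure}(1) to build the union automaton within the class of two-tape FA with all states final.
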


\begin{proof}
First observe that given sets $A$ and $B$ we have that $A \subseteq B$ if
and only if $A \cup B = B$.
Let us suppose that we have two multi-tape automata $N_1$ and $N_2$ with the
same numbers of tapes and the same alphabets and assume that
all states of $N_1$ and $N_2$ are final states.
By Proposition \ref{prop:closure} we know that 
we can construct a multi-tape automaton $N_3$ such that $\tr(N_3) = \tr(N_1) \cup \tr(N_2)$
and all states of $N_3$ are final states.
Thus, if we can decide whether $\tr(N) = \tr(M)$ for two multi-tape FA
that have only final states then we can also decide whether 
$\tr(N_1) \cup \tr(N_2) = \tr(N_2)$
for two multi-tape FA that only have final states.
However, this holds if and only if
$\tr(N_1) \subseteq \tr(N_2)$.
The result thus follows from Theorem \ref{theorem1}.
\end{proof}

We now show how we can represent the problem of deciding language
inclusion for multi-tape FA in terms of deciding whether $N \sconf M$
for suitable FSMs $N$ and $M$.

\section{Deciding Strong Conformance}\label{section:decide_strong}

We have proved some decidability results for multi-tape FA
in which all states are final states.
However, we are interested in FSMs and here a transition has
an input/output pair as a label.
We now show how a multi-tape FA can be represented using an
FSM (with multiple ports)
before using this result to prove that $N \sconf M$ is generally undecidable for
FSMs $N$ and $M$.

In order to extend Theorem \ref{theorem1} to FSMs we 
define a function that takes a multi-port finite automaton and returns an
FSM.

\begin{definition}
Let us suppose that $N = (S, s_0,\Sigma, h, S)$ is a FA with $r$ tapes
with alphabets  $\Sigma_1, \ldots, \Sigma_r$.
We define the FSM $\makeFSM(N)$ with $r+1$ ports as defined below
in which for all $1 \leq p \leq r$ we have that the input alphabet
of $N$ at $p$ is $\Sigma_p$ and the output alphabet is empty and further
we have that the input alphabet at port $r+1$ is empty and the output
alphabet at $r+1$ is $\{0,1\}$.
In the following for $a \in \{0,1\}$ we use $a_{k}$ to denote the $k$-tuple 
whose first $k-1$ elements are empty and whose $k$th element is $a$.

$\makeFSM(N) = (S\cup\{\error\},s_0,\Sigma,\{0_{n+1},1_{n+1}\},h')$ in which
$\error \not \in S$, for all $z \in \Sigma$ we have that
$h'(\error,z) = \{(\error,0_{r+1})\}$ and for all $s \in S$ and $z
\in \Sigma$ we have that $h'(s,z)$ is defined by the following:
\begin{enumerate}
\item If $h(s,z)  = S' \neq \emptyset$ then $h'(s,z) = \{(s',1_{r+1}),(s',0_{r+1}) | s' \in S'\}$;

\item If $h(s,z) = \emptyset$ then $h'(s,z) = \{(\error,0_{r+1})\}$.
\end{enumerate}
\end{definition}

The idea is that while following a path of $N$ the FSM $\makeFSM(N)$ can produce either $0$ or $1$ at port $r+1$ in response to each input but once we diverge from such a path the FSM can then only produce $0$ (at $r+1$) in response to an input.

\begin{lemma}\label{lemma:nfsm}
Let us suppose that $N$ and $M$ are $r$-tape FA with alphabets
$\Sigma_1, \ldots, \Sigma_r$.
Then $\tr(N) \subseteq \tr(M)$ if and only if
$\makeFSM(N) \sconf \makeFSM(M)$.
\end{lemma}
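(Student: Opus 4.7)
The plan is to exploit the role of the extra output port $r+1$ in $\makeFSM(\cdot)$ as a \emph{witness} of whether the current input history has been traced through the original FA states (versus being absorbed by $\error$). First I would characterise the global traces of $\makeFSM(N)$: such a trace $\sigma = z_1/(a_1)_{r+1}\,\ldots\,z_k/(a_k)_{r+1}$ is in $L(\makeFSM(N))$ iff, letting $T = \max\{i : a_i = 1\}$ (with $T = 0$ if no such $i$ exists), the word $z_1\ldots z_T$ labels a path of $N$ from $s_0$. The reason is that from the copies of $N$-states one may choose either $0$ or $1$ as output, but once $\error$ is entered (either forced by $h(s,z)=\emptyset$ or chosen freely) only $0$ is produced thereafter. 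Moreover the projections of $\sigma$ are: for $1\leq p\leq r$ the subsequence of the $z_i\in\Sigma_p$, and for port $r+1$ the string $a_1\ldots a_k\in\{0,1\}^*$.

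For the forward direction, assume $\tr(N)\subseteq\tr(M)$ and take $\sigma\in L(\makeFSM(N))$ with the decomposition above. The tuple $\bigl(\pi_1(z_1\ldots z_T),\ldots,\pi_r(z_1\ldots z_T)\bigr)$ lies in $\tr(N)$ (because $z_1\ldots z_T$ is an $N$-path and all states are final), hence in $\tr(M)$, so there exists an interleaving $\hat z_1\ldots\hat z_T$ of these same component words that labels a path of $M$ from $s_0$. I would then build $\sigma'$ by concatenating: on the input side, $\hat z_1\ldots\hat z_T$ followed by the unchanged tail $z_{T+1}\ldots z_k$; on the output side, the original $a_1,\ldots,a_T$ during the $M$-path (allowed since each $\hat z_i$ has a genuine transition available) followed by $0$s for the tail (allowed by routing to $\error$, which stays there outputting $0$ on any input). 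The input projection at each port $p\le r$ matches because rearranging $z_1\ldots z_T$ to $\hat z_1\ldots\hat z_T$ preserves the $p$-subsequence by construction, and the tail is identical; the port $r+1$ projection matches because $a_{T+1}=\cdots=a_k=0$ in both. So $\sigma\sim\sigma'$ and $\sigma'\in L(\makeFSM(M))$.

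For the reverse direction, assume $\makeFSM(N)\sconf\makeFSM(M)$ and take $(w_1,\ldots,w_r)\in\tr(N)$, witnessed by a path $z_1\ldots z_k$ in $N$. I would consider the ``all-ones'' global trace $\sigma = z_1/(1)_{r+1}\,\ldots\,z_k/(1)_{r+1}$, which lies in $L(\makeFSM(N))$ because every state of $N$ is final and $\makeFSM(N)$ is allowed to emit $1$ from any state of $N$. By $\sconf$ there is $\sigma'\in L(\makeFSM(M))$ with $\sigma\sim\sigma'$. The port-$r+1$ projection forces $\sigma'$ to have $k$ transitions with output $1_{r+1}$ throughout, so $\sigma'$ never visits $\error$; consequently its input sequence $\hat z_1\ldots\hat z_k$ is a genuine path of $M$ from $s_0$. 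Matching projections on ports $1,\ldots,r$ give $\pi_p(\hat z_1\ldots\hat z_k)=w_p$, so $(w_1,\ldots,w_r)\in\tr(M)$.

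The main thing that needs care (rather than being a deep obstacle) is the bookkeeping around the index $T$ in the forward direction: one must be precise that nothing forces the output string to ``flip back'' from $0$ to $1$ once in $\error$, that the new interleaving $\hat z_1\ldots\hat z_T$ has the same \emph{per-port} subsequence as $z_1\ldots z_T$ (so concatenating with the tail gives matching per-port projections), and that after position $T$ the construction of $\makeFSM(M)$ really does permit output $0$ on every input regardless of the current state. All three points follow directly from the definition of $\makeFSM(\cdot)$, so the proof is essentially an unpacking of that construction.
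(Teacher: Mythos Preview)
Your argument is correct and follows essentially the same route as the paper: the decomposition at the last position $T$ where output $1$ occurs is exactly the paper's split $\rho=\rho_1\rho_2$ with $\rho_2$ the maximal all-zero suffix, and your ``all-ones'' trace in the reverse direction is precisely the paper's $\rho'$. One small inaccuracy worth tightening: by the definition of $\makeFSM$, when $h(s,z)\neq\emptyset$ the next state must lie in $S'$, so $\error$ is \emph{not} entered ``chosen freely'' --- but this does not affect your proof, since (as your final paragraph correctly notes) output $0_{r+1}$ is available from every state, real or $\error$, and that is all the tail construction needs.
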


\begin{proof}
First assume that $\makeFSM(N) \sconf \makeFSM(M)$
and we are required to prove that  $\tr(N) \subseteq \tr(M)$.
Assume that $\sigma \in \tr(N)$ and so there exists some
$\sigma' \sim \sigma$ such that $\sigma' \in L(N)$.
Since $\sigma' \in L(N)$ we have that $L(\makeFSM(N))$ contains
the global trace $\rho'$ in which the input portion is
$\sigma'$ and each output is $1_{r+1}$.
Since $\makeFSM(N) \sconf \makeFSM(M)$ we must have that there
is some $\rho'' \in L(\makeFSM(M))$ such that $\rho'' \sim \rho'$.
However, since the outputs are all at port $r+1$ and
the inputs are at ports $1, \ldots, r$ we must have that
$\rho''$ has output portion that contains only $1_{r+1}$ and input portion $\sigma''$
for some $\sigma'' \sim \sigma'$.
Thus, we must have that $\sigma'' \in L(M)$.
Since $\sigma \sim \sigma'$ and $\sigma' \sim \sigma''$
we must have that $\sigma \in \tr(M)$ as required.

Now assume that $\tr(N) \subseteq \tr(M)$
and we are required to prove that  $\makeFSM(N) \sconf \makeFSM(M)$.
Let $\rho$ be some element of $L(\makeFSM(N))$ and it is sufficient
to prove that $\rho \in \tr(\makeFSM(M))$.
Then $\rho = \rho_1\rho_2$ for some maximal $\rho_2$
such that all outputs in $\rho_2$ are $0_{r+1}$.
Let the input portions of $\rho_1$ and $\rho_2$ be
$\sigma_1$ and $\sigma_2$ respectively.
By the maximality of $\rho_2$ we must have that
$\rho_1$ is either empty or ends in output $1_{r+1}$.
Thus, $\sigma_1 \in L(N)$ and so,
since $\tr(N) \subseteq \tr(M)$,
there exists $\sigma'_1 \sim \sigma_1$ with $\sigma'_1 \in L(M)$.
But this means that $M$ can produce the output portion of
$\rho_1$ in response to $\sigma'_1$ and so there exists
$\rho'_1 \in L(\makeFSM(M))$ with $\rho' _1\sim \rho_1$.
By the definition of $\makeFSM(M)$,
since all outputs in $\rho_2$ are $0_{r+1}$ we have
that $\rho'=\rho'_1\rho_2 \in L(\makeFSM(M))$.
The result therefore follows from observing that
$\rho' = \rho'_1\rho_2 \sim \rho_1\rho_2 = \rho$.
\end{proof}

\begin{theorem}\label{theorem:FSM_undecidable}
The following problem is undecidable:
given two multi-port FSMs $N$ and $M$ with the same alphabets,
do we have that $N \sconf M$?
\end{theorem}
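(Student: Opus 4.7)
The plan is to establish undecidability by a direct reduction from the multi-tape language inclusion problem of Theorem \ref{theorem1}, using the construction $\makeFSM$ and Lemma \ref{lemma:nfsm} as the bridge. All of the real work has already been done; the final theorem should fall out as a clean corollary.

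More concretely, I would proceed as follows. Suppose, for contradiction, that there is an algorithm $\mathcal{A}$ that, given any two multi-port FSMs with common alphabets, decides whether $N \sconf M$. Starting from an arbitrary instance of the two-tape inclusion problem from Theorem \ref{theorem1}, i.e.\ two multi-tape FA $N'$ and $M'$ with the same alphabets $\Sigma_1, \ldots, \Sigma_r$ and in which every state is final, I would apply the function $\makeFSM$ to each, obtaining FSMs $\makeFSM(N')$ and $\makeFSM(M')$ over $r+1$ ports with common input and output alphabets (the original tape alphabets at ports $1,\ldots,r$, plus a single port $r+1$ whose output alphabet is $\{0,1\}$, encoded as $0_{r+1}$ and $1_{r+1}$). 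This construction is clearly effective: $\makeFSM$ only adds one error state $\error$, duplicates transitions with outputs $0_{r+1}$ and $1_{r+1}$, and completes the transition relation, so it can be performed in polynomial time from the description of the FA.

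Now I would invoke Lemma \ref{lemma:nfsm}, which states precisely that $\tr(N') \subseteq \tr(M')$ if and only if $\makeFSM(N') \sconf \makeFSM(M')$. Hence, feeding $\makeFSM(N')$ and $\makeFSM(M')$ into the hypothetical decider $\mathcal{A}$ yields a decision procedure for multi-tape FA language inclusion over FA whose states are all final. This contradicts Theorem \ref{theorem1}, so no such $\mathcal{A}$ can exist. Note that Theorem \ref{theorem1} already establishes undecidability for the two-tape case, so the resulting FSMs only need three ports, giving a slightly sharper statement if desired.

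There is essentially no obstacle: the potential hazards are all already handled by the earlier machinery. In particular, one might worry about whether $\makeFSM(N')$ and $\makeFSM(M')$ satisfy the standing assumptions of being completely specified and initially connected, but by construction $h'$ is total (the error state absorbs any otherwise undefined input), and every reachable state of $N'$ or $M'$ remains reachable in $\makeFSM(\cdot)$, while $\error$ can either be reached (if some state of the FA has an undefined input) or safely omitted otherwise. One might also worry about whether the two FSMs have the same alphabets as required by the statement of $\sconf$, but since $N'$ and $M'$ are assumed to share their tape alphabets, so do $\makeFSM(N')$ and $\makeFSM(M')$. Thus the reduction goes through verbatim, and the theorem follows.
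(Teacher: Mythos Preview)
Your proposal is correct and follows exactly the paper's approach: the paper's proof is the one-line observation that the result follows immediately from Lemma~\ref{lemma:nfsm} and Theorem~\ref{theorem1}, and you have simply spelled out that reduction in detail. The additional remarks you make about complete specification, initial connectedness, shared alphabets, and the number of ports are reasonable sanity checks but do not depart from the paper's argument.
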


\begin{proof}
This follows from Lemma \ref{lemma:nfsm}
and Theorem \ref{theorem1}.
\end{proof}

When considering FSM $M$ with only one port,
we represent the problem of deciding whether two states $s$ and $s'$
of $M$ are equivalent by comparing the FSMs $M_{s}$ and $M_{s'}$,
formed by starting $M$ in $s$ and $s'$ respectively.
However, we also have that the general problem is undecidable.

\begin{theorem}\label{theorem:equiv_states}
The following problem is undecidable:
given a multi-port FSM $M$ and two states $s$ and $s'$ of $M$,
are $s$ and $s'$ equivalent.
\end{theorem}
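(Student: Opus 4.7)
The plan is to reduce the problem $N \sconf M$, undecidable by Theorem~\ref{theorem:FSM_undecidable}, to deciding state equivalence within a single multi-port FSM. Following the paper's remark just before the statement, I interpret ``$s$ equivalent to $s'$'' as the bidirectional relation $M_s \sconf M_{s'}$ and $M_{s'} \sconf M_s$, where $M_q$ denotes $M$ restarted from state $q$. The overall strategy is: given an arbitrary instance $(N,M)$ of $\sconf$, build a single FSM $M^*$ containing two distinguished states $s$ and $s'$ whose equivalence in this sense is exactly $N \sconf M$.

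Concretely, I would form $M^*$ by taking disjoint relabelled copies of $N$ and $M$ (preserving the ports and the input/output alphabets) and adding a fresh state $r_0$. From $r_0$, for each input $x$, I include the transition $(r_0,n',x/y)$ whenever $(n',y)\in h_N(n_0,x)$ (with $n'$ read as a state of the $N$-copy) together with the transition $(r_0,m',x/y)$ whenever $(m',y)\in h_M(m_0,x)$ (with $m'$ read as a state of the $M$-copy). Since $N$ and $M$ are completely specified and initially connected, $M^*$ is also completely specified, and it is initially connected when $r_0$ is declared the initial state. Setting $s$ to be (the image of) $m_0$ and $s' = r_0$, no transition leaves either copy back to $r_0$ or across to the other copy, so $L(M^*_s) = L(M)$ and $L(M^*_{s'}) = L(N) \cup L(M)$.

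The verification is then immediate. One direction, $M^*_s \sconf M^*_{s'}$, always holds since every trace of $M^*_s$ lies in $L(M)\subseteq L(M^*_{s'})$ and is $\sim$-equivalent to itself. For the converse, $M^*_{s'} \sconf M^*_s$ holds iff every trace in $L(N)\cup L(M)$ is $\sim$-equivalent to some trace in $L(M)$; traces in $L(M)$ are trivially fine, so the condition collapses to $L(N)\subseteq \tr(M)$, i.e.\ $N \sconf M$. Hence $s$ and $s'$ are equivalent in $M^*$ iff $N \sconf M$, and undecidability follows from Theorem~\ref{theorem:FSM_undecidable}.

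The main subtlety, rather than a technical obstacle, is matching the informal phrase ``are $s$ and $s'$ equivalent'' with the bidirectional $\sconf$ reading used above; the same reduction also works under the equivalent reading $\tr(M^*_s)=\tr(M^*_{s'})$, since $\sconf$ is precisely language inclusion up to the closure operator $\tr(\cdot)$. Care is also needed to confirm that $M^*$ respects the paper's standing assumptions (initial connectedness and complete specification), but both are inherited automatically from $N$ and $M$ by this disjoint-union-plus-fresh-initial-state construction.
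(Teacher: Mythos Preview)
Your reduction is correct and follows the same overall strategy as the paper: embed the two FSMs into a single FSM so that deciding equivalence of two designated states decides the original (undecidable) conformance question. The details differ, however. The paper first passes from inclusion to \emph{equivalence} of FSMs (via $A\subseteq B \iff A\cup B=B$, implicitly), and then compares the two initial states $s_{01},s_{02}$ directly inside a combined machine; to make that machine initially connected it introduces a \emph{new port} with a fresh input $x_p$ whose sole role is to reach $s_{01}$ or $s_{02}$ nondeterministically from a new initial state, with $x_p$ acting as a harmless self-loop elsewhere. Your construction instead bakes the union into a single fresh state $r_0$ whose outgoing transitions replicate those of both $n_0$ and $m_0$, so that $L(M^*_{r_0})=L(N)\cup L(M)$; comparing $r_0$ with $m_0$ then tests inclusion directly. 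Your route is a little more economical (no extra port, no detour through two-sided equivalence), while the paper's route keeps the two initial states as the literal witnesses and isolates the ``glue'' on a separate port, which makes the argument that the embedding does not disturb $\sim$ particularly transparent.
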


\begin{proof}
We will prove that we can express the problem of deciding whether
multi-port FSMs are equivalent in terms of state equivalence.
We therefore assume that we have multi-port FSMs $M_1$ and $M_2$
with the same input and output alphabets and we wish to decide whether
$M_1$ an $M_2$ are equivalent.
Let $s_{01}$ and $s_{02}$ denote the initial states of $M_1$ and $M_2$ respectively.
We will construct an FSM $M$ in the following way. We add a new port
$p$ and input $x_p$ at $p$.
The input of $x_p$ in the initial state $s_0$ of $M$ moves $M$ non-deterministically
to either $s_{01}$ or $s_{02}$ and produces no output.
All other input in state $s_0$ moves $M$ to a state $s'_0 \neq s_0$,
that is not a state of $M_1$ or $M_2$,
from which all transitions are self-loops.
The input of $x_p$ in a state of $M_1$ or $M_2$ leads to no output and
no change of state.
Now we can observe that a sequence in the language defined by starting
$M$ in state $s_{0i}$, $i \in \{0,1\}$,
is equivalent under $\sim$ to a sequence from $\tr(M_i)$ followed by
a sequence of zero or more instances of $x_p$.
Thus, $s_{01}$ and $s_{02}$ are equivalent if and only if $M_1$ and $M_2$
are equivalent.
The result thus follows from Theorem \ref{theorem:FSM_undecidable}
and the fact that if we can decide equivalence then we can also decide inclusion.
\end{proof}

We now consider problems relating to distinguishing FSMs and states in testing.
We can only distinguish between FSMs and states on the basis of observations
and each observation, in distributed testing,
defines an equivalence class of $\sim$.

\begin{definition}
It is possible to distinguish FSM $N$ from FSM $M$
in distributed testing if and only if $\tr(N) \not \subseteq \tr(M)$.
Further, it is possible to distinguish between FSMs $N$ and
$M$ in distributed testing if and only if $\tr(N) \not \subseteq \tr(M)$ and
$\tr(M) \not \subseteq \tr(N)$.
\end{definition}

The first part of the definition says that we can only distinguish $N$ from
$M$ in distributed testing if there is some global trace of $N$ that is
not observationally equivalent to a global trace of $M$.
The second part strengthens this by requiring that we can distinguish $N$ from $M$
and also $M$ from $N$.
The following is an immediate consequence of Theorem \ref{theorem:FSM_undecidable}.

\begin{theorem}
The following problems are generally undecidable in distributed testing.
\begin{itemize}
\item Is it possible to distinguish FSM $N$ from FSM $M$?

\item Is it possible to distinguish between FSMs $N$ and $M$?

\end{itemize}
\end{theorem}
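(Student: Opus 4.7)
The first bullet reduces immediately to Theorem \ref{theorem:FSM_undecidable}. By the preceding definition, $N$ can be distinguished from $M$ iff $\tr(N) \not\subseteq \tr(M)$. Since $\sim$ is reflexive we have $L(N) \subseteq \tr(N)$, and since $\tr(M)$ is closed under $\sim$ we obtain $\tr(N) \subseteq \tr(M)$ iff $L(N) \subseteq \tr(M)$, which by the observation made just after the definition of $\sconf$ is exactly $N \sconf M$. Thus any decision procedure for the first bullet would decide $N \sconf M$, contradicting Theorem \ref{theorem:FSM_undecidable}.

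For the second bullet the plan is to reduce the first problem to it. Given FSMs $N$ and $M$ with common alphabets and ports, I would effectively construct FSMs $N'$ and $M'$ so that (i) $\tr(M') \not\subseteq \tr(N')$ always, and (ii) $\tr(N') \subseteq \tr(M')$ iff $N \sconf M$. Property (i) makes the second half of the distinguishability condition trivially true, so distinguishing between $N'$ and $M'$ collapses to the failure of $\tr(N') \subseteq \tr(M')$, which by (ii) is the failure of $N \sconf M$; undecidability then follows from the first bullet.

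To realise (i) and (ii) I would adjoin a fresh port $p^*$ carrying a fresh input $x^*$ and a fresh output symbol $y^*$. Both $N'$ and $M'$ keep every transition of $N$ and $M$ respectively, and in every existing state each adds an $x^*$ self-loop with null output at every port. The single asymmetry is that $M'$ additionally has, from its initial state, a non-deterministic transition on $x^*$ to a fresh absorbing state that emits $y^*$ at $p^*$ and self-loops with null output thereafter. Property (i) is then immediate: the resulting trace of $M'$ has non-empty projection at $p^*$ ending in $y^*$, while $y^*$ appears in no transition of $N'$, so no trace of $N'$ is $\sim$-equivalent to it.

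Property (ii) is the main task. For the forward direction, any $\sigma \in L(N)$ also lies in $L(N')$ with empty $p^*$-projection, so any $\sim$-equivalent $\sigma' \in L(M')$ has empty $p^*$-projection and therefore contains neither $x^*$ nor $y^*$, placing it in $L(M)$. For the backward direction, given $\sigma \in L(N')$, strip the $x^*$ self-loops to obtain $\sigma_0 \in L(N)$, apply $N \sconf M$ to get $\sigma'_0 \in L(M)$ with $\sigma_0 \sim \sigma'_0$, and reinsert the same number of $x^*$ self-loops into $\sigma'_0$ at arbitrary positions to produce $\sigma' \in L(M')$ with $\sigma \sim \sigma'$; this works because $x^*$ contributes only to the $p^*$-projection, and that projection is simply a sequence of $x^*$'s on both sides. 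The step requiring the most care is this reinsertion: I would need to verify that $x^*$ self-loops are available at every intermediate state of $\sigma'_0$ (they are, by construction) and that the resulting local projections agree at every port.
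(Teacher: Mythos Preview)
Your argument is correct. For the first bullet you do exactly what the paper does: observe that distinguishing $N$ from $M$ is the complement of $N \sconf M$ and invoke Theorem~\ref{theorem:FSM_undecidable}.

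For the second bullet the paper simply declares the result an immediate consequence of Theorem~\ref{theorem:FSM_undecidable}, whereas you give an explicit reduction. Your extra care is warranted: undecidability of the predicate ``$\tr(N)\not\subseteq\tr(M)$'' does not by itself imply undecidability of the conjunction ``$\tr(N)\not\subseteq\tr(M)$ and $\tr(M)\not\subseteq\tr(N)$'', so some reduction is needed. Your construction---adjoining a fresh port with an input that is a harmless self-loop in both machines but that additionally enables a distinguished output in $M'$---cleanly forces $\tr(M')\not\subseteq\tr(N')$ while leaving $\tr(N')\subseteq\tr(M')$ equivalent to $N\sconf M$. The verification of property~(ii) is sound: in the forward direction the empty $p^*$-projection forces the witnessing $M'$-trace back into $L(M)$, and in the backward direction the reinsertion of $x^*$ self-loops works because they are available at every original state of $M'$ and affect only the $p^*$-projection, which is just a count of $x^*$'s. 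The only cosmetic point is to note that the absorbing sink in $M'$ must self-loop on \emph{all} inputs (not just $x^*$) to keep $M'$ completely specified, which your phrase ``self-loops with null output thereafter'' presumably intends.

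In short: same approach as the paper on the first item, and a genuinely more careful (and necessary) argument on the second, where the paper's one-line claim glosses over the reduction you supply.
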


Similar to the proof of Theorem \ref{theorem:equiv_states},
we can express the problem of distinguishing between two FSMs as that of
distinguishing two states $s$ and $s'$ of an
FSM $M$.
Thus, the above shows that it is undecidable whether there is some test case that is capable of
distinguishing two states of an FSM or two FSMs.
This complements a previous result \cite{Hierons10_SICOMP},
that it is undecidable whether there is some test case that is
\emph{guaranteed} to distinguish two states or FSMs.

Finally, we give conditions under which equivalence and inclusion are decidable.
The first uses the notion of a Parikh Image of a sequence \cite{Parikh66}.
Given a sequence $\sigma \in \Sigma^*$,
where we have ordered the elements of $\Sigma$ as $a_1, \ldots, a_m$,
the Parikh Image of $\sigma$ is the tuple $(x_1, \ldots, x_m)$
in which for all $1 \leq i \leq m$ we have that $\sigma$ contains $x_i$
instances of $a_i$.
Given a set $A$ of sequences,
the Parikh Image of $A$ is the set of tuples formed by taking the
Parikh Image of each sequence in $A$.

There are classes of languages where the Parikh Image of the language
is guaranteed to be a semi-linear set.
A linear set is defined by a set of vectors $v_0, \ldots, v_k$
that have the same dimension.
Specifically,
the linear set defined by  $v_0, \ldots, v_k$
and is the set of $v_0 + n_1v_1 + \ldots + n_kv_k$ where
$n_1, \ldots, n_k$ are all non-negative integers.
A semi-linear set is a finite union of linear sets.

\begin{proposition}
Let us suppose that multi-port FSMs $N$ and $M$ have the same input
and output alphabets and that for each port $p \in \ports$ we have that
$|\Act_p| \leq 1$.
Then it is decidable whether $N \sconf M$.
\end{proposition}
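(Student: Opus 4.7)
The plan is to reduce the decision problem to inclusion of semi-linear sets, which is decidable via Presburger arithmetic, by exploiting the very restrictive assumption on $|\Act_p|$.

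First I would observe that the hypothesis $|\Act_p| \leq 1$ for every $p \in \ports$ forces the equivalence relation $\sim$ to coincide with equality of Parikh images over $\Act$. Indeed, if port $p$ has a unique observable symbol $a_p$ (or no observable symbol at all), then for any global trace $\sigma$ the local trace $\pi_p(\sigma)$ is simply $a_p^{k_p}$, where $k_p$ is the number of occurrences of $a_p$ in $\sigma$; hence $\sigma \sim \sigma'$ if and only if, for every symbol $a$ appearing in $\Act$, the two traces $\sigma$ and $\sigma'$ contain $a$ the same number of times. Because the $I_p$'s and $O_p$'s partition the alphabet, the equivalence class of $\sigma$ under $\sim$ is therefore determined by, and determines, its Parikh image. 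Note that because the input/output pairs of an FSM are drawn from $I \times O$ with $O$ an $n$-tuple, I would first pass to the flattened alphabet $\Act$ by replacing each transition label $x/(z_1,\ldots,z_n)$ with the multiset $\{x\} \cup \{z_p : z_p \neq -\}$ of port-level events; this is sound precisely because distinguishability under $\sim$ only depends on per-port counts.

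Next I would translate the statement $N \sconf M$. Since $N \sconf M$ is equivalent to $L(N) \subseteq \tr(M)$, and since under our hypothesis $\tr(M)$ is exactly the $\sim$-closure of $L(M)$ which equals the set of sequences with the same Parikh image as some element of $L(M)$, we obtain
\[
N \sconf M \iff \Psi(L(N)) \subseteq \Psi(L(M)),
\]
where $\Psi$ denotes the Parikh image map on $\Act^*$.

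Now I would invoke Parikh's theorem: the Parikh image of any regular language (and in fact of any context-free language) is an effectively computable semi-linear subset of $\mathbb{N}^{|\Act|}$. Since $L(N)$ and $L(M)$ are regular (being defined by FSMs, after the flattening above), we can compute semi-linear presentations of $\Psi(L(N))$ and $\Psi(L(M))$. Semi-linear sets are exactly the Presburger-definable subsets of $\mathbb{N}^{|\Act|}$, so inclusion $\Psi(L(N)) \subseteq \Psi(L(M))$ can be encoded as a closed sentence of Presburger arithmetic, which is decidable.

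The main obstacle, and really the only subtle point, is the first step: carefully justifying that the $\sim$-equivalence class of a global trace is captured by counts at each port, and that replacing transition labels by their underlying $\Act$-multisets preserves both $L$ and $\tr$ modulo $\sim$. Once that reduction is in place, the rest is a direct appeal to Parikh's theorem and the decidability of Presburger arithmetic, giving decidability (though not, without further analysis, a good complexity bound).
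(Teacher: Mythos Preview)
Your proposal is correct and follows essentially the same route as the paper: observe that under $|\Act_p|\le 1$ the relation $\sim$ is determined by per-port counts, reduce $N \sconf M$ to inclusion of Parikh images, and appeal to the decidability of semi-linear set inclusion (the paper cites this directly rather than going via Presburger, but that is the same fact). Your explicit flattening of transition labels $x/(z_1,\ldots,z_n)$ to multisets over $\Act$ is in fact a useful clarification, since over the unflattened alphabet $I\times O$ two $\sim$-equivalent traces can have distinct Parikh images; the paper's proof implicitly works over $\Act$ but does not spell this out.
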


\begin{proof}
Since for all $p \in \ports$ we have that $|\Act_p| \leq 1$,
for each $\sigma \in \Act^*$ we have that $\sigma$ is
equivalent under $\sim$ to all its permutations.
Thus,
the Parikh Image of a sequence in $L(N)$ or $L(M)$
uniquely defines the corresponding equivalence class.
Thus, $N \sconf M$ if and only if the Parik Image of $L(N)$
is a subset of the Parikh Image of $L(M)$.
However,
these Parikh Images are semi-linear sets and it is decidable whether one
semi-linear set is a subset of another (see, for example,  \cite{KopczynskiT10}).
The result thus follows.
\end{proof}

We now consider the case where each transition produces output
at all ports.

\begin{proposition}
Let us suppose that $M$ is an FSM in which all transitions
produce output at all ports.
Then $N \sconf M$ if and only if $L(N) \subseteq L(M)$.
\end{proposition}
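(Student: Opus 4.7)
The plan is to prove both directions. The ($\Leftarrow$) direction is immediate: if $L(N)\subseteq L(M)$, then every $\sigma\in L(N)$ already lies in $L(M)$ and witnesses itself (since $\sigma\sim\sigma$), so $N\sconf M$. For ($\Rightarrow$), I will establish the stronger structural fact that the hypothesis on $M$ forces $\sim$ to collapse to the identity on $L(M)$: whenever $\sigma'\in L(M)$ and $\sigma\sim\sigma'$, we must have $\sigma=\sigma'$. Combined with $N\sconf M$, this immediately gives $L(N)\subseteq L(M)$.

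Fix $\sigma=x_1/y_1\cdots x_k/y_k\in L(N)$ and $\sigma'=x'_1/y'_1\cdots x'_\ell/y'_\ell\in L(M)$ with $\sigma\sim\sigma'$; by hypothesis every component of every $y'_j$ is non-null. First I would run a counting argument. Summing the number of input symbols in $\pi_p(\sigma)$ over all $p\in\ports$ recovers the total number of inputs in $\sigma$, namely $k$, and similarly for $\sigma'$ gives $\ell$; since $\pi_p(\sigma)=\pi_p(\sigma')$ for every $p$, we obtain $k=\ell$. Next, because every step of $\sigma'$ emits output at port $p$, the projection $\pi_p(\sigma')$ contains exactly $\ell=k$ output symbols, so $\pi_p(\sigma)$ does too; but the number of output symbols in $\pi_p(\sigma)$ equals the number of indices $i$ with $(y_i)_p\neq -$, which is at most $k$ and equals $k$ iff $(y_i)_p\neq -$ for every $i$. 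Hence every $y_i$ is non-null at every port, and $\sigma$ inherits the full-output property of $M$.

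With both sequences emitting output at every port in every step, each port's projection has a rigid shape: in order it lists, for $j=1,\ldots,k$, the symbol $x_j$ iff $x_j\in I_p$, followed by $(y_j)_p$. A position-by-position read-off then gives $(y_j)_p=(y'_j)_p$ for all $j,p$ (so $y_j=y'_j$), and the presence and identity of an input symbol between the $(j-1)$-th and $j$-th output at port $p$ reveals whether $x_j$ is located at $p$ and, if so, what $x_j$ is. Ranging over $p$ recovers each $x_j$, yielding $x_j=x'_j$ and thus $\sigma=\sigma'\in L(M)$. The main obstacle is the counting step that promotes $\sigma$ to being full-output; this is where the hypothesis on $M$ is really used, and once it is in hand the subsequent read-off is just bookkeeping.
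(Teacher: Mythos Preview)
Your proof is correct and follows essentially the same approach as the paper: both arguments show that the full-output hypothesis on $M$ forces $\sim$ to collapse to equality between any $\sigma\in L(N)$ and its witness $\sigma'\in L(M)$, via a port-by-port reading of the projections. The paper is terser---it simply asserts as an ``observation'' that $N\sconf M$ forces every transition of $N$ to produce output at all ports---whereas you spell out the underlying counting argument explicitly, which is a genuine improvement in clarity.
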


\begin{proof}
First observe that if $N \sconf M$ then each transition of $N$
must also produce output at every port.
Consider a sequence $\sigma \in L(M) \cup L(N)$ that contains $k$ inputs.
Since every transition produces output at all ports,
for a port $p$ we have that $\pi_p(\sigma)$
contains $k$ outputs with each input $x_i$ at $p$ being
between the output produced at $p$ by the previous
input and the output produced at $p$ in response to $x_i$.
Thus, given sequences $\sigma, \sigma' \in L(N) \cup L(M)$
we must have that $\sigma' \sim \sigma$ if and only if
$\sigma' = \sigma$.
The result therefore holds.
\end{proof}

\section{Bounded conformance}\label{section:bounded}

We have seen that it is undecidable whether two FSMs
are related under  $\sconf$.
However,
we might use a weaker notion of conformance where
we only consider sequences of length at most $k$ for some $k$.
This would be relevant when the expected usage does not involve
sequences of length greater than $k$
since, for example,
the system will be reset after at most $k$ inputs.
In this section we define such a weaker implementation relation
and explore the problem of deciding whether two FSMs are
related under this.

First,
we introduce some notation.
We let $IO_k$ denote the set of global traces that have at most $k$ inputs.
In addition, for an FSM $N$ we let $L_k(N) = L(N) \cap IO_k$
denote the set of global traces of $N$ that have at
most $k$ inputs.
We can now define our implementation relation.

\begin{definition}
Given FSMs $N$ and $M$ with the same input and output alphabets,
we say that $N$ strongly $k$-conforms to $M$ if for
all $\sigma \in L_k(N)$ there exists some
$\sigma' \in L(M)$ such that $\sigma' \sim \sigma$.
If this is the case then we write $N \sconf^k M$.
\end{definition}

Clearly, given $N$ and $M$ it is decidable whether $N \sconf^k M$:
we can simply generate every element of $L_k(N)$ and for each
$\sigma \in L_k(N)$ we determine whether $\sigma \in \tr(M)$.
The following shows that this can be achieved in polynomial time
if we have a bound on the number of ports.
We use a result from Mazurkiewicz trace theory.
In Mazurkiewicz trace theory an independence graph is a directed graph where each
vertex of the graph represents an element of $\Act$
and there is an edge between the vertex representing $a \in \Act$
and the vertex representing $b \in \Act$ if and only if
$ab$ and $ba$ are equivalent; $a$ and $b$ are said to be independent \cite{diekert97}.
Thus, for FSMs we have that $a$ and $b$ are independent if and only
if they are at different ports.

\begin{lemma}\label{lemma:oracle}
Given a sequence $\sigma \in IO_k$
and FSM $M$ with $n$ ports,
we can decide whether $\sigma \in \tr(M)$
in time of $O(|\sigma|^n)$.
\end{lemma}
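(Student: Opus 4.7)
The strategy is to decide $\sigma \in \tr(M)$ by reachability in a product automaton that synchronizes $M$ with the local projections of $\sigma$. First I would compute the local traces $w_p = \pi_p(\sigma)$ for each $p \in \ports$, observing that $|w_p| \le |\sigma|$ since $\pi_p$ only deletes symbols. The key point is that $\sigma \in \tr(M)$ iff $M$ has a path from $s_0$ whose label $\sigma'$ satisfies $\pi_p(\sigma') = w_p$ for every $p$, so the problem reduces to searching $M$ for such a $\sigma'$ while tracking how much of each $w_p$ has been matched so far.

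Next I would build a product graph $A$ whose vertices are tuples $(s, i_1, \ldots, i_n)$ with $s \in S$ and $0 \le i_p \le |w_p|$. For every transition $(s', y) \in h(s, x)$ of $M$ I add an edge from $(s, i_1, \ldots, i_n)$ to $(s', i_1 + c_1, \ldots, i_n + c_n)$, where $c_p \in \{0,1,2\}$ is the number of symbols that $\pi_p$ extracts from the single step $x/y$ (one for $x$ when $x \in I_p$, plus one for $y_p$ when $y_p \neq -$). The edge is included only if those $c_p$ symbols of $w_p$ starting at position $i_p$ are exactly the symbols contributed by the step $x/y$ at port $p$. A straightforward induction on path length shows that a path from $(s_0, 0, \ldots, 0)$ to an accepting vertex $(s, |w_1|, \ldots, |w_n|)$ corresponds precisely to a global trace $\sigma' \in L(M)$ with $\pi_p(\sigma') = w_p$ for all $p$, i.e. $\sigma' \sim \sigma$; conversely, any such $\sigma'$ in $L(M)$ induces such a path.

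Reachability in $A$ can then be decided by breadth-first search. The number of vertices is at most $|S| \cdot \prod_{p=1}^{n} (|w_p| + 1) \le |S| \cdot (|\sigma|+1)^n$, each vertex has out-degree bounded by the total number of transitions of $M$, and the matching test on each candidate edge is $O(1)$. Treating $|M|$ as a fixed parameter of the problem, this yields the claimed $O(|\sigma|^n)$ running time.

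The main obstacle is the bookkeeping in the construction of $A$: the four cases in the definition of $\pi_p$ (input at $p$ and/or output at $p$) must be handled uniformly so that the increments $c_p$ and the expected next symbols of $w_p$ are read off correctly from a single transition $(s, s', x/y)$ that can advance several projections simultaneously by different amounts. Once this case analysis is fixed, both directions of correctness are immediate from the inductive definition of $\pi_p$ given in Section~\ref{section:preliminaries}, and the complexity bound follows.
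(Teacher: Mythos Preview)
Your proof is correct, but it takes a genuinely different route from the paper's. The paper does not construct anything: it simply observes that the question ``is $\sigma \in \tr(M)$?'' is an instance of the membership problem for rational trace languages in the sense of Mazurkiewicz trace theory, and then invokes the known result of Bertoni, Mauri and Sabadini that this membership problem can be solved in time $O(|\sigma|^{\alpha})$, where $\alpha$ is the size of the largest clique in the independence graph. Since two actions are independent here precisely when they lie at different ports, the maximal cliques have size exactly $n$, and the bound follows in one line.

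Your argument, by contrast, is a self-contained product construction: you synchronize $M$ with the $n$ local traces $w_p = \pi_p(\sigma)$ via a reachability search over tuples $(s,i_1,\ldots,i_n)$. This is essentially what lies underneath the cited trace-theoretic result, specialised to the present setting. What your version buys is that it is elementary, requires no outside reference, and makes the dependence on $|M|$ explicit (a multiplicative $|S|$ and out-degree factor). What the paper's version buys is brevity and a connection to a well-developed theory, which the paper already uses elsewhere when discussing independence graphs. Both arrive at the same $O(|\sigma|^n)$ bound once $M$ is treated as fixed.
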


\begin{proof}
The membership problem for a sequence $\sigma$
and rational trace language with alphabet $\Sigma$ and independence relation
$\mathcal{I}$ can be solved in time of $O(|\sigma |^{\alpha})$ where $\alpha$ is the size of the
largest clique in the independence graph \cite{BertoniMS82}.
Since each observation is made at exactly one port
and two observations are independent if and only if they are at different ports,
we have that the maximal cliques of the independence graph all have size $n$
and so $\alpha = n$.
The result therefore follows.
\end{proof}

\begin{theorem}\label{theorem:bounded_poly}
If there are bounds on the value of $k$ and the number $n$ of ports then 
the following problem can be solved in polynomial time:
given FSMs $N$ and $M$ with at most $n$ ports,
do we have that $N \sconf^k M$?
\end{theorem}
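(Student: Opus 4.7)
The plan is to decide $N \sconf^k M$ by directly enumerating the finite language $L_k(N)$ and, for each enumerated global trace $\sigma$, invoking Lemma \ref{lemma:oracle} on $(\sigma, M)$ to check whether $\sigma \in \tr(M)$; the answer is ``yes'' precisely when every such $\sigma$ passes the check. This is clearly a correct decision procedure by the definition of $\sconf^k$, so the work lies entirely in showing that both the enumeration and the per-sequence membership check can be carried out in polynomial time once $k$ and $n$ are bounded.

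First I would bound the enumeration cost. Each global trace in $L_k(N)$ corresponds to a path of length at most $k$ in $N$ starting at $s_0$, and the number of transitions leaving any state is at most $|I|\cdot|S|\cdot|O|$ (one per $(x,(s',y))$ with $(s',y)\in h(s,x)$). Hence the number of such paths is at most $\sum_{j=0}^{k}(|I|\cdot|S|\cdot|O|)^j$, which, because $k$ is treated as a constant, is polynomial in $|N|$; a simple depth-$k$ traversal produces them all within the same bound. Each resulting trace has at most $k$ input/output pairs, so its total number of port-observations is at most $k(n+1)$.

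Next I would combine this with Lemma \ref{lemma:oracle}: for every enumerated $\sigma$ the test $\sigma \in \tr(M)$ runs in time $O(|\sigma|^n)=O((k(n+1))^n)$, which is polynomial (indeed bounded by a constant in the parameters being fixed, modulo the polynomial dependence on $|M|$ that is implicit in the lemma). Multiplying polynomially many enumerated traces by the polynomial per-trace cost yields a polynomial-time algorithm overall, giving the theorem.

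The main point that needs to be emphasised, rather than a genuine technical obstacle, is that both bounds are essential: without the bound on $k$ the factor $(|I|\cdot|S|\cdot|O|)^k$ can grow exponentially, and without the bound on $n$ the exponent in Lemma \ref{lemma:oracle} blows up. These are precisely the two hypotheses the theorem imposes, and the subsequent NP-hardness result in the paper will show that they cannot be dropped.
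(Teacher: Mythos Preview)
Your proposal is correct and follows essentially the same approach as the paper: enumerate $L_k(N)$ (polynomially many traces because the branching factor is polynomial and $k$ is fixed), then apply Lemma~\ref{lemma:oracle} to each trace (polynomial time because $n$ is fixed). Your bound $|I|\cdot|S|\cdot|O|$ on the out-degree is just an explicit version of the paper's ``maximum number of transitions leaving a state,'' and your closing remark about the necessity of both hypotheses matches the paper's own commentary.
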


\begin{proof}
First observe that the number of elements in $L_k(N)$ is
of $O(q^k)$, where $q$ denotes the maximum number of
transitions leaving a state of $N$.
Thus, since $k$ is bounded,
the elements in $L_k(N)$ can be produced in polynomial time.
It only remains to consider each $\sigma$ in $L_k(N)$
and decide whether it is in $\tr(M)$.
However,
by Lemma \ref{lemma:oracle},
this can be decided in polynomial time.
The result therefore follows.
\end{proof}

Thus, if we have bounds on the number of ports of the system and the
length of sequences we are considering then we can decide whether
$N \sconf^k M$ in polynomial time.
However,
the proof of Theorem \ref{theorem:bounded_poly} introduced
terms that are exponential in $n$ and $k$ and so it is natural
to ask what happens if we do not place bounds on these values.
It transpires that the problem is then NP-hard even for  DFSMs,
the proof using the following.

\begin{definition}
Given boolean variables $z_1, \ldots, z_r$ let $C_1, \ldots, C_k$ denote sets of three literals,
where each literal is either a variable $z_i$ or its negation. The
three-in-one SAT problem is: does there exist an assignment to the
boolean variables such that each $C_j$ contains exactly one true
literal.
\end{definition}

The three-in-one SAT problem is known to be NP-complete \cite{Schaefer78}. 
%

\begin{theorem}
The following problem is NP-hard:
given $k$ and completely specified DFSMs $N$ and $M$,
do we have that $N \sconf^k M$?
\end{theorem}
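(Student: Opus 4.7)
The plan is to prove NP-hardness by polynomial-time Karp reduction from the 3-in-1 SAT problem, which is NP-complete by \cite{Schaefer78}. Given a 3-in-1 SAT instance $\phi$ with variables $z_1, \ldots, z_r$ and clauses $C_1, \ldots, C_m$, I will construct in polynomial time completely specified DFSMs $N$ and $M$ sharing a common input/output alphabet and port structure, together with a polynomial bound $k$, such that
\[
\phi \text{ has a 1-in-3 satisfying assignment} \iff N \sconf^k M.
\]
It is crucial that YES-instances align in this direction (satisfiable $\leftrightarrow$ conforming); the opposite alignment would only establish co-NP-hardness.

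The central idea is to make the universal quantifier ``$\forall \sigma \in L_k(N)$'' in the definition of $\sconf^k$ collapse to a single trace. I would construct $N$ so that, within length $k$, its only informative global trace is one canonical $\sigma$, while any deviating input quickly drives $N$ into a trivial self-looping region whose behaviour is contained in $\tr(M)$ by construction. Under this restriction, $N \sconf^k M$ becomes equivalent to the single existential question ``$\sigma \in \tr(M)$?'', i.e.\ ``does some $\tau \in L(M)$ satisfy $\tau \sim \sigma$?''. The witnesses $\tau$ are precisely the interleavings of the fixed port projections $(\pi_p(\sigma))_{p \in \ports}$ of $\sigma$, and this cross-port interleaving freedom is what I use to encode assignments of $\phi$.

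Concretely, I would introduce one port $p_i$ for each variable $z_i$, one port $q_j$ for each clause $C_j$, and a single auxiliary control port carrying a marker event. At each variable port $p_i$, the projection of $\sigma$ carries a pair of witness events for $z_i$ whose cross-port position relative to the marker encodes the value $\alpha_i$ chosen for $z_i$. The DFSM $M$ is built as a streaming verifier: as it processes the global trace it emits, at each clause port $q_j$, the truth value of $C_j$'s three literals in the order in which their variables are committed. By fixing $\sigma$'s projection at each clause port $q_j$ to a canonical ``exactly one true, two false'' pattern, the equivalence $\tau \sim \sigma$ forces every clause port to receive exactly one ``true'' event and two ``false'' events, so $\tau \in L(M)$ holds iff the interleaving-encoded assignment 1-in-3 satisfies every clause.

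The main obstacle will be keeping $M$ polynomially sized in $|\phi|$: a naive implementation that remembers the full committed assignment needs $2^r$ states. I expect to avoid the blow-up by exploiting the locality of DFSM transitions, since each variable commit only affects the at most three clause ports belonging to the clauses containing that variable, and the global 1-in-3 condition is enforced automatically by the constraint that $\tau$'s clause-port projections match the canonical pattern prescribed by $\sigma$. The most delicate point, and where the construction has to be engineered carefully, is choosing the within-port order of the witness events and the transition structure of $M$ so that every 1-in-3 satisfying assignment admits at least one compatible global interleaving (yielding conformance), while no non-satisfying assignment does (yielding non-conformance); this closes the reduction and establishes that even for deterministic, completely specified $N$ and $M$ the problem is NP-hard once bounds on $k$ and the number of ports are dropped.
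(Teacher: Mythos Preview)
Your high-level plan coincides with the paper's: reduce from one-in-three SAT, use one port per variable and one per clause together with control ports, encode an assignment by where each variable's port events fall relative to a control marker in a free interleaving, and collapse the universal quantifier in $\sconf^k$ so that conformance becomes the single membership question $\sigma\in\tr(M)$ for one canonical trace $\sigma$. You also have the reduction direction right (satisfiable $\iff$ conforming), which is the point at which such arguments most often go wrong.

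Two differences from the paper are worth noting. First, the paper's verifier is far simpler than you fear: it has only four states. A single control input $z_0$ flips the machine from a state in which input $z_p$ emits one symbol at every clause port whose clause contains the literal $z_p$, to a state in which $z_p$ emits at every clause port whose clause contains $\neg z_p$. No assignment is stored, so the $2^r$ blow-up never threatens. The canonical $\sigma$ carries exactly \emph{one} output at each clause port, and $\sigma\in\tr(M_1)$ iff some placement of the $z_p$'s before/after $z_0$ makes exactly one literal fire per clause.

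Second, your specific encoding needs care. If $M$ emits a distinguishable symbol (``true'' or ``false'') for each of a clause's three literals, then the projection of $\sigma$ at that clause port is a fixed \emph{sequence} such as $TFF$, not the multiset $\{T,F,F\}$. That forces the unique true literal to be the first one committed, and there are one-in-three satisfiable instances for which no single global commitment order achieves this simultaneously for all clauses (e.g.\ two clauses whose true literals impose opposite orderings on a shared pair of variables). The paper sidesteps this by emitting only on true literals, so the clause-port projection is the order-insensitive ``exactly one output''. Finally, turning the membership question into an instance $N\sconf^k M$ with $N,M$ both deterministic and completely specified is not quite as simple as ``make $\sigma$ the only informative trace of $N$''; the paper does it by grafting a path for a $\sim$-equivalent of $\sigma$ onto $M_1$ to form $N$, and an almost identical family of paths onto $M_1$ to form $M$, so that modulo $\sim$ the languages of $N$ and $M$ differ in exactly the equivalence class of $\sigma$.
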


\begin{proof}
We will show that we can reduce the three-in-one SAT problem to this
and suppose that we have variables $z_1, \ldots, z_r$
and clauses $C_1, \ldots, C_q$. 
We will define a DFSM $M_1$ with $r+q+2$
ports, inputs $z_{-1},z_0,z_1, \ldots, z_r$ at ports $-1,0,1, \ldots, r$ and
outputs $y_1, \ldots, y_{r+q}$ at ports $1, \ldots, r+q$.
We count ports from $-1$ rather than $1$ since the
roles of inputs at $-1$ and $0$ will be rather different from the roles of the
other inputs.

DFSM $M_1$ has four states $s_{0},s_1,s_2,s_3$ in which $s_{0}$ is the
initial state.
The states effectively represent different `modes' and we now describe
the roles of $s_1$ and $s_2$.
In state $s_1$ an input at port 
$p$, $1 \leq p \leq r$, will lead to output at all of the ports corresponding to clauses with literal $z_p$.
In state $s_2$ an input at port 
$p$, $1 \leq p \leq r$, will lead to output at all of the ports corresponding to clauses with literal $\neg z_p$.
The input $z_0$ moves $M_1$ from $s_1$ to $s_2$.
The special input $z_{-1}$ takes $M_1$ from state $s_{0}$ to state $s_1$.

Overall, input $z_0$ does not produce output and only changes the state of $M_1$ if it is in state $s_1$, in which case it takes $M$ to state $s_2$. Input $z_{-1}$ does not produce output and only changes the state of $M_1$ if it is in state $s_{0}$, in which case it takes $M_1$ to state $s_1$.

For an input $z_p$ with $1 \leq p \leq r$ there are four
transitions:
\begin{enumerate}
\item From state $s_1$ there is a transition that,
for all $1 \leq j \leq k$, sends output $y_{r+j}$ to port $r+j$ if $C_j$ contains
literal $z_p$ and otherwise sends no output to port $r+j$.
The transition sends no output to ports $-1, \ldots, r$ and does not change state.

\item From state $s_2$ there is a transition that, for all $1 \leq j \leq k$, sends
output $y_{r+j}$ to port $r+j$ if $C_j$ contains
literal $\neg z_p$ and otherwise sends no output to port $r+j$.
The transition sends no output to ports $-1, \ldots, r$ and does not change state.

\item From state $s_{0}$ there is a transition to state $s_3$ that produces no output.

\item From state $s_3$ there is a transition to state $s_3$ that produces no output.
\end{enumerate}

Now consider the global trace $\sigma$ that starts with input sequence
$z_{-1} z_0 z_1 \ldots z_{r-1}$
and then has input $z_r$ producing the outputs $y_{r+1} \ldots y_{r+q}$;
all outputs are produced in response to the last input.
Clearly we do not have $\sigma \in L(M_1)$.
We now prove that $\sigma \in \tr(M_1)$ if and only if there is a solution to the
instance of the three-in-one SAT problem.
Consider the problem of deciding whether there exists $\sigma' \in L(M_1)$
such that $\sigma' \sim \sigma$.
Clearly the first input in $\sigma'$ must be $z_{-1}$.
Each input $z_p$ is received once by the DFSM and these can be received in any order after $z_{-1}$.
Thus, for all $1 \leq p \leq r$ we do not know whether  $z_p$ will be received before or after $z_0$ in $\sigma'$.
If $z_p$ is received before $z_0$ then an output is sent to all
ports that correspond to clauses that contain literal $z_p$.
If $z_p$ is received after $z_0$ then an output is sent to all
ports that correspond to clauses that contain literal $\neg z_p$.
Thus there exists $\sigma' \in L(M_1)$ such that $\sigma' \sim \sigma$ if
and only if there exists an assignment to the
boolean variables $z_1, \ldots, z_r$ such that each $C_j$ contains exactly one true
literal. 

We now construct DFSMs $N$ and $M$ such that $N \sconf^k M$ if and only if
$\sigma \in \tr(M_1)$.
In the following we assume that $r>1$
and let $\sigma_1$ be the global trace formed from $\sigma$
by replacing the prefix $z_{-1} z_0 z_1$ by $z_1 z_{-1} z_0$.
Thus, $\sigma_1 \sim \sigma$.
We form $N$ from $M_1$ by adding a new path that has label $\sigma_1$.
We add state $s'_3$ such that the input
of $z_1$ in state $s_{0}$ leads to state $s'_3$ (instead of $s_3$)
and no output.
From $s'_3$ we add a transition with label $z_0$ to another new state
$s'_4$.
We repeat this process, adding new states,
until we have a path from $s_0$ with label $z_1 z_0 z_{-1} z_2 z_3 \ldots z_{r-1}$
ending in state $s'_{r+3}$.
We then add a transition from $s'_{r+3}$ to $s'_{r+4}$ with input $z_r$
and the outputs $y_{r+1}, \ldots, y_{r+q}$.
Finally, we complete $N$ by adding a transition to $s_3$ with input $z_p$ and null output
from a state $s'_j$ if there is no transition from $s'_j$ with input $z_p$.
Clearly, $L(N) = L(M_1) \cup \pref(\sigma_1)I^*$.
Let $\sigma_1'$ be defined such that $\sigma_1 = z_{1}\sigma_1'$.
We can similarly form an FSM $M$ from $M_1$ such that
$L(M) = L(M_1) \cup \pref (\{z_{1}\}I\{\sigma_1'\})I^*$.
Since each $I_p$ contains only one input we have that
$\{z_{1}\}I\{\sigma_1'\}I^*$ and $\{\sigma_1\}I^+$ define the
same sets of equivalence classes under $\sim$.
Thus, the equivalence classes of $\pref(\sigma_1)I^*$ and
$\pref (\{z_{1}\}I\{\sigma_1'\})I^*$ under $\sim$ differ only in
the one that contains $\sigma_1$ and we know that $\sigma_1 \sim \sigma$.
We therefore have that 
$N \sconf^k M$, for $k > r+1$,
if and only if $\sigma \in \tr(M_1)$ and we know
that this is the case if and only if the instance of the three-in-one
SAT problem has a solution.
The result follows from the three-in-one SAT problem being NP-hard.
\end{proof}

Naturally,
the results in this section are also relevant when we are looking for tests
of length no longer than $k$ that distinguish states or FSMs.

\section{Conclusions}\label{section:conclusions}

There are important classes of systems
such as cloud systems, web services and wireless sensor networks,
that interact with their environment at physically distributed ports.
In testing such a system we place a local tester at each port
and the local tester (or user) at port $p$ only observes the events
that occur at $p$.
It is known that this reduced observational power,
under which a set of local traces is observed,
can introduce additional controllability and observability problems.

This paper has considered the situation in which there is a
finite state machine (FSM) model $M$ that acts as the specification for
a system that interacts with its environment at physically distributed ports.
We considered the implementation relation $\sconf$
that requires the set of local traces observed to be consistent with
some global trace of the specification.
We investigated the problem of defining a language $\trmax (M)$ such that
we know that $N \sconf M$ if and only if all of the global traces of $N$
are contained in $\trmax (M)$.
We showed that $\trmax (M)$ can be uniquely defined but need not be regular.

We proved that it is generally undecidable whether $N \sconf M$
even if there are only two ports,
although we also gave conditions under which this is decidable.
An additional consequence of this result is that it is undecidable
whether there is a test case (a strategy for each local tester)
that is \emph{capable} of distinguishing two states of an FSM or two FSMs.
This complements earlier results that show that it is undecidable
whether there is a test case that is \emph{guaranteed} to distinguish between
two states of an FSM or two FSMs.
While these results appear to be related the proofs relied on very different
approaches:
the earlier result looked at the problem in terms of multi-player games
while this paper developed and used results regarding
multi-tape automata.

Since it is generally undecidable whether $N \sconf M$ we
defined a weaker implementation relation $\sconf^k$ under
which we only consider input sequences of length $k$ or less.
This is particularly relevant in situations in which it is known
that input sequences of length greater than $k$ need not be
considered since, for example, the system must be reset before this limit has
been reached.
We proved that if we place a bound on $k$ and the number of ports
then we can decide whether $N \sconf^k M$ in polynomial
time but otherwise this problem is NP-hard.

There are several avenues for future work.
First, 
there is the problem of finding weaker conditions under
which we can decide whether $N \sconf M$.
In addition,
it would be interesting to find conditions under which $\trmax (M)$ can
be constructed.
There is also the problem of extending the results to situation in
which we can make additional observations;
for example,
we might consider languages such as CSP in
which we can also observe refusals.
Finally,
one of the motivations for this work was the problem of deciding whether
there is a test case that is capable of distinguishing two states of an FSM
and, despite this being undecidable, it would be interesting to
develop heuristics for this problem.

\bibliographystyle{plain}
\bibliography{trace,test,papers_hierons}

\end{document}